\pgfplotsset{width=7cm,compat=1.9}
\newtheorem{theorem}{Theorem}
\newtheorem{prop}{Proposition}
\newtheorem{fact}{Fact}
\newcommand{\renyi}{R$\mathrm{\acute{e}}$nyi }
\newcommand{\mc}{\mathcal}
\newcommand{\mb}{\mathbf}
\newcommand{\mbb}{\mathbb}
\newcommand{\id}{\mathbb{I}}
\newcommand{\comments}[1]{}
\begin{document}

\let\oldacl\addcontentsline
\renewcommand{\addcontentsline}[3]{}

\title{A hybrid framework for estimating nonlinear functions of quantum states}

\begin{abstract}
Estimating nonlinear functions of quantum states, such as the moment $\tr(\rho^m)$, is of fundamental and practical interest in quantum science and technology. Here we show a quantum-classical hybrid framework to measure them, where the quantum part is constituted by the generalized swap test, and the classical part is realized by postprocessing the result from randomized measurements. This hybrid framework utilizes the partial coherent power of the intermediate-scale quantum processor and, at the same time, dramatically reduces the number of quantum measurements and the cost of classical postprocessing. We demonstrate the advantage of our framework in the tasks of state-moment estimation and quantum error mitigation.
\end{abstract}
 \date{\today}
\author{You Zhou}
\email{you\_zhou@fudan.edu.cn}
\affiliation{Key Laboratory for Information Science of Electromagnetic Waves (Ministry of Education), Fudan University, Shanghai 200433, China}

\author{Zhenhuan Liu}
\email{liu-zh20@mails.tsinghua.edu.cn}
\affiliation{Center for Quantum Information, Institute for Interdisciplinary Information Sciences, Tsinghua University, Beijing 100084, China}

\maketitle

Quantum measurement is one of the fundamental building blocks of quantum physics, connecting the quantum world to its classical counterpart. The linear expectation value of the quantum state $\rho$ in the form $\tr(O\rho)$ can be measured directly in the basis of the observable $O$ by Born's rule. However, the measurement of nonlinear functions, such as the \renyi entropy and the moment $P_m=\tr(\rho^m)$, generally involves quantum circuits interfering among $m$ copies of the state $\rho$ by the generalized swap test \cite{Ekert2002Direct,Horodecki2002Method,Todd2004Polynomial,Garcia2013equivalent},  which transform nonlinear functions to linear ones on all that copies. Although there are significant experimental advances for the $m=2$ case \cite{Islam2015Measuring,Kaufmanen2016tanglement,cotler2019cooling,huang2022quantum}, it is still challenging to extend to a larger degree $m$ with a moderate system size on current quantum platforms \cite{preskill2018quantum}.

Recently alternative approaches based on the randomized measurement (RM) \cite{elben2023randomized}, such as the shadow estimation \cite{aaronson2019shadow,huang2020predicting}, are proposed. By postprocessing the results from random basis measurements of sequentially prepared states,
the shadow estimation is efficient for measuring local observables and the fidelity to some entangled states. However, when measuring nonlinear functions, such as the purity $P_2$, RM protocol inevitably needs an exponential number of measurements and postprocessings \cite{elben2020mixedstate,huang2020predicting,singlezhou}, which hinders its further applications for large systems. 

By trading off the swap test and the RM protocol, here we propose a hybrid framework for estimating nonlinear functions of quantum states to fill the gap between them, which inherits their advantages and reduces weaknesses. Specifically, one can conduct RM on a few jointly prepared copies of the state by utilizing the partial coherent power of the quantum processor, and then estimate many nonlinear functions in a more efficient way, i.e., less demanding on both the quantum hardware and classical post-processing.

The nonlinear function of quantum states $\{\rho_i\}_{i=1}^m$ with some observables $\{O_i\}_{i=1}^m$ of interest reads
\begin{equation}\label{eq:nonlinear}
    \begin{aligned}
       \tr(O_1\rho_1 O_2\rho_2 \cdots O_m \rho_m),
    \end{aligned}
\end{equation} 
which is general and includes, such as the state overlap and fidelity \cite{elben2020cross,Zhenhuan2022correlation}, the purity and higher-order moments \cite{van2012Measuring,Brydges2019Probing}, quantum Fisher information \cite{rath2021Fisher}, out-of-time-ordered correlators \cite{Vermersch2019Scrambling,garcia2021quantum}, and topological invariants \cite{Elben2020topological,Cian2020Chern}.
For simplicity, hereafter we mainly adopt the moment function $P_m$ to illustrate the framework, where $O_i=\id_d$ and $\rho_i=\rho$ in Eq.~\eqref{eq:nonlinear}, and assume $\rho$ is an $n$-qubit state, i.e., $\rho\in \mc{H}_d=\mc{H}_2^{\otimes n}$ with $d=2^n$.

\begin{figure}[htbp!]
\centering
\resizebox{8cm}{!}{\includegraphics[scale=0.8]{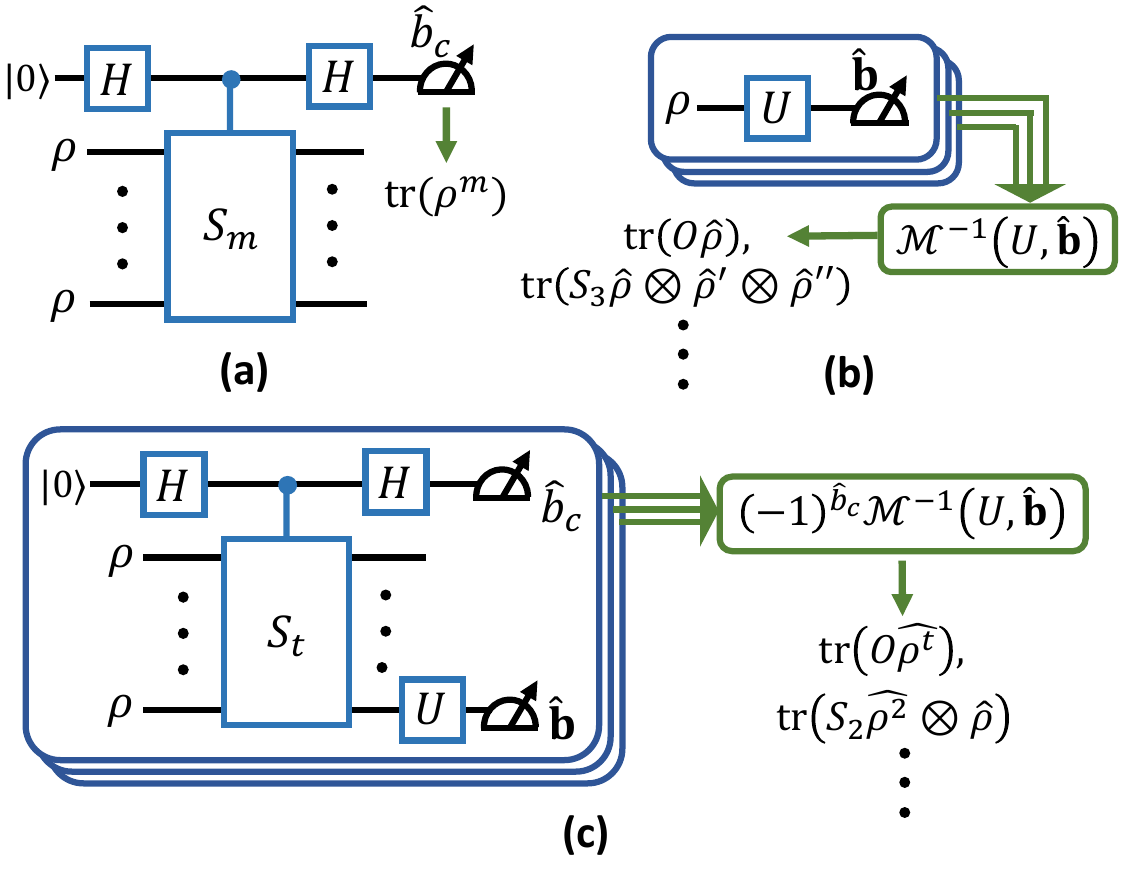}}
\caption{Illustrations of (a) generalized swap test, (b) shadow protocol, and (c) hybrid shadow protocol.  All these three protocols can be divided into two phases, the quantum experiment and the classical postprocessing labeled by green arrows. 
In (c), one measures not only the control qubit but also the last copy of $\rho$ after the random evolution $U$. The information of $U$ and measurement results $\hat{b_c}$ and $\widehat{\mb{b}}$ are used to construct the unbiased estimator $\widehat{\rho^t}$ according to Eq.~\eqref{eq:K1shadow}, and thus estimate $t$-degree function $o_t=\tr(O\rho^t)$ directly. One can also patch them together to estimate higher-degree functions like $P_m=\tr(\rho^m)$ by Eq.~\eqref{eq:HshadowM}. 
}
\label{Fig:original} 
\end{figure}

\emph{Swap test and RM}.---First, let us take a quick review of the swap test and the RM protocol for measuring $P_m$. 
The moment can be written as $P_m=\tr(S_m\rho^{\otimes m})$, with $S_m$ the shift operation on $\mathcal{H}_d^{\otimes m}$ satisfying $S_m\ket{\mb{b}_1,\mb{b}_2\cdots,\mb{b}_m}=\ket{\mb{b}_2,\cdots\mb{b}_{m},\mb{b}_{1}}$, with each $\{\ket{\mb{b}}\}$ the basis of each copy. 
Using the swap test, one can measure $P_m$ efficiently with complete coherent control over multiple copies of the state, as illustrated in Fig.~\ref{Fig:original} (a).
In particular, one initializes a control qubit and prepares the $m$-copy state $\rho^{\otimes m}$, and then conducts the Controlled-shift operation
\begin{equation}\label{}
    \begin{aligned}
        \mathrm{CS}_m=\ket{0}_c\bra{0}\otimes \id_{d^m}+\ket{1}_c\bra{1}\otimes S_m.
    \end{aligned}
\end{equation}
Finally one measures the control qubit to get the value of $P_m$ \cite{Ekert2002Direct,Horodecki2002Method}. The corresponding quantum circuit requires a quantum processor with a total of $N=nm+1$ qubits. Although the shift operation on $m$-copy can be expressed as a product of 2-copy swap operations, this approach would significantly increase the quantum circuit depth. Furthermore, preparing $m$ copies of the state in parallel imposes stringent demands on quantum memories. Therefore, the generalized swap test presents significant challenges for cases where $m\geq 3$.

The RM toolbox \cite{elben2023randomized} was recently developed to ease the experimental challenge mentioned above \cite{Struchalin2021Experimental,zhang2021experimental,Yu2021Fisher}. Compared with swap test, one only needs to control a single-copy state to realize the estimation. For shadow estimation \cite{huang2020predicting}, independent snapshots of the state $\{\widehat{\rho}_{(i)}\}$ 
can be constructed using data conllected in RMs, as shown in Fig.~\ref{Fig:original} (b). 
This is referred to as the shadow set and has the property that the expectation $ \mathbb{E}(\widehat{\rho}_{(i)})=\rho$. Denote the random unitary evolution sampled from some ensemble as $U\in \mc{E}$, and the $Z$-basis measurement result as $\mb{b}=\{b_1b_2\cdots b_n\}\in \{0,1\}^n$, 
\begin{equation}\label{eq:shadowEst}
\begin{aligned}
    \widehat{\rho}=\mc{M}^{-1}\left(U^{\dag}\ketbra{\widehat{\mb{b}}}{\widehat{\mb{b}}}U\right),
\end{aligned}
\end{equation}
where $\widehat{\mb{b}}$ is a random variable with probability $\bra{\mb{b}}U\rho U^{\dag}\ket{\mb{b}}$, and the inverse (classical) postprocessing channel $\mc{M}^{-1}$ is determined by the chosen $\mc{E}$ \cite{huang2020predicting,Hu2022Hamiltonian,hu2022Locally,ohliger2013efficient,bu2022classical}. For instance, $\mc{E}$ can be the global or local Clifford ensemble, denoted as Clifford and Pauli measurements hereafter, respectively. 
After constructing the shadow set, the unbiased estimator of $P_m$ can be constructed as
\begin{equation}\label{eq:shadow}
    \begin{aligned}
       \tr(S_m\widehat{\rho}_{(1)}\otimes \widehat{\rho}_{(2)}\otimes\cdots \widehat{\rho}_{(m)}).
    \end{aligned}
\end{equation}
In principle, any nonlinear function can be obtained with only an $n$-qubit quantum processor using sequential RMs. However, 
the number of measurements needed generally scales exponentially with the qubit number, and the scaling becomes worse for larger $m$ values, such as $m\geq 3$ \cite{huang2020predicting,elben2020mixedstate}.

\emph{Hybrid framework for nonlinear functions}.---By trading off the quantum and classical resources, here we develop a hybrid framework for nonlinear functions. 
All proofs and more detailed discussions are left in Ref.~\cite{supplementary}. 

For a nonlinear function of degree $m$ such as $P_m$, where $m=\sum_{i=1}^L m_i$, we demonstrate that it can be estimated using a quantum processor with only $N=n(\max_i m_i)+1$ qubits. The core idea is to conduct RM on $\rho^t$, where $t=m_i$, by leveraging the coherent operation on $t$ copies of $\rho$. Instead of directly reading the control qubit outcome to obtain $P_t$, as in the swap test, we perform RM on one of the prepared $t$ copies of $\rho$. As the permutation symmetry holds, the RM can be performed on any copy, like the final one shown in Fig.~\ref{Fig:original} (c). By measuring the expectation value of the Pauli-$X$ operator on the control qubit and performing the projective measurement on the final copy, we obtain

\begin{equation}\label{eq:hybrid}
    \begin{aligned}
     &\tr[ X_c\otimes \id_d^{\otimes(t-1)}\otimes \ket{\mb{b}}\bra{\mb{b}}\ U\ \mathrm{CS}_t\ (\ket{+}_c\bra{+}\otimes \rho^{\otimes t})\  \mathrm{CS}_t^{\dag}\ U^{\dag}]\\
     =& \frac1{2}\Big{\{}\tr[X_c\ket{1}_c\bra{0}]\tr[\bra{\mb{b}}U S_t \rho^{\otimes t}\ U^{\dag}\ket{\mb{b}}]\\
     &\ \ \ +\tr[X_c\ket{0}_c\bra{1}]\tr[\bra{\mb{b}}U \rho^{\otimes t}S_t ^\dag \ U^{\dag}\ket{\mb{b}}]  \Big{\}}\\
     =&\bra{\mb{b}}U\rho^tU^{\dag}\ket{\mb{b}}.
    \end{aligned}
\end{equation}
Here the identity operators $\id_d^{\otimes(t-1)}$ are on the first $t-1$ copies, and the projective measurement $\ket{\mb{b}}\bra{\mb{b}}$ and random unitary $U$ is on the final $t$-th copy, as shown in Fig.~\ref{Fig:hybrid}.The result in the last line indicates that one can effectively conduct RM on $\rho^t$. 

\begin{figure}[htbp!]
    \centering
    \includegraphics[scale=0.3]{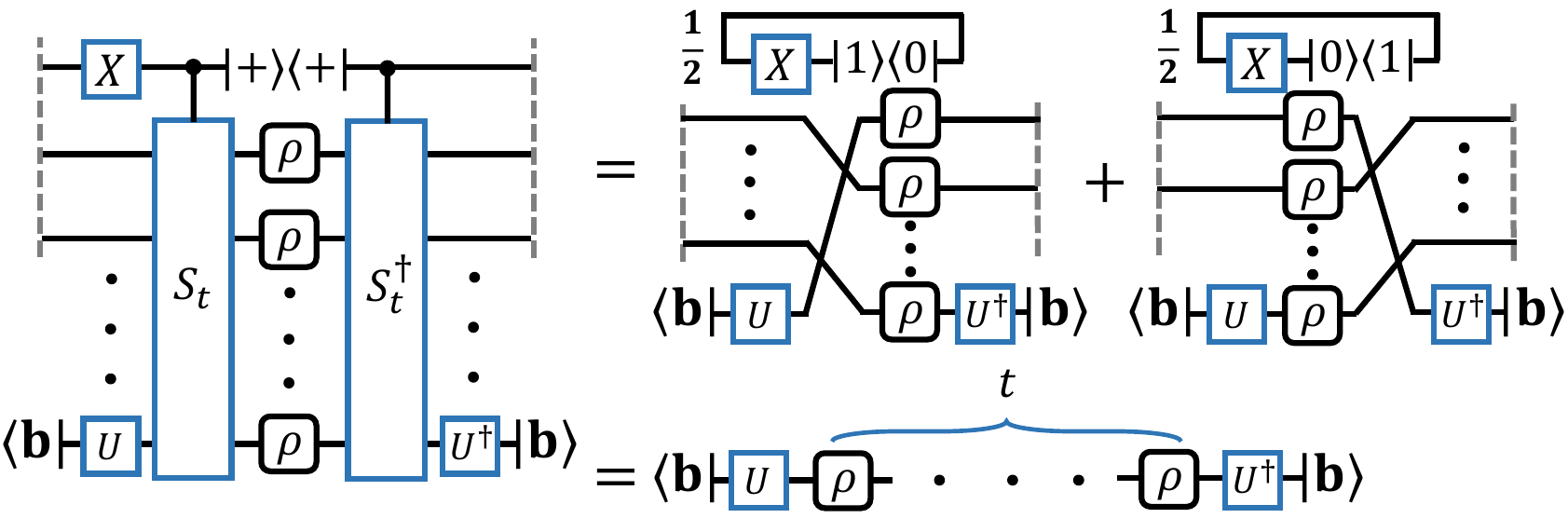}
    \caption{
    The demonstration of Eq.~\eqref{eq:hybrid} by the tensor diagram. The vertical dotted lines denote the periodic boundary condition, i.e., the trace operation. The shift operation $S_t$ and its conjugate $S_t^{\dag}$ are represented by the cyclic permutations of the indices (legs) of the $t$ copies of $\rho$.
    }
    \label{Fig:hybrid}
\end{figure}

The full measurement procedure is listed in Algorithm \ref{algo:tShadow}, which aims to construct the shadow set $\{\widehat{\rho^{t}}_{(i)}\}_{i=1}^M$ of $\rho^t$ from the RM results collected in Fig.~\ref{Fig:original} (c), and these shadow snapshots can be used to estimate more complex nonlinear functions. 

\begin{algorithm}[H]
\caption{Hybrid shadow estimation}\label{algo:tShadow}
\begin{algorithmic}[1]
\Require
$M\times K$ sequentially prepared $\rho^{\otimes t}$ and control qubit initially set as $\ket{0}_c$.
\Ensure
The shadow set $\{\widehat{\rho^{t}}_{(i)}\}_{i=1}^M$. 
\For{$i= 1~\text{\textbf{to}}~M$} 
 \State Randomly choose $U\in\mc{E}$ and record it. 
 \For{$j= 1~\text{\textbf{to}}~K$} 
  \State Conduct the quantum circuit shown in Fig.~\ref{Fig:original} (c).
  \State Measure the control qubit and the final copy of $\rho^{\otimes t}$ in the computational basis $\{\ket{b_c}\}$ and  $\{\ket{\mb{b}}\}$.
  \State Construct the unbiased estimator $\widehat{\rho^t}_{(i)}^{(j)}$ using the results $b_c^{(i,j)}$ and $\mb{b}^{(i,j)}$ by Eq.~\eqref{eq:K1shadow}, where $i$ and $j$ denoting the $j$-th measurement under the $i$-th unitary.
  \EndFor
 \State Average $K$ results under the same unitary to get $\widehat{\rho^t}_{(i)}=\frac1{K}\sum_j\widehat{\rho^t}^{(j)}_{(i)}$.
\EndFor
\State Get the shadow set $\left\{\widehat{\rho^t}_{(1)}, \widehat{\rho^t}_{(2)}, \cdots \widehat{\rho^t}_{(M)}\right\}$, which contains $M$ independent estimators of $\rho^{t}$.
\end{algorithmic}
\end{algorithm}

\begin{theorem}\label{Th:swapshadow}
Suppose one conducts the circuit shown in Fig.~\ref{Fig:original} (c) for once $(M=K=1)$, the unbiased estimator of $\rho^t$ shows
\begin{equation}\label{eq:K1shadow}
    \begin{aligned}
       \widehat{\rho^t}:=(-1)^{\widehat{b}_c}\cdot \mc{M}^{-1}\left(U^{\dag}\ket{\widehat{\mb{b}}}\bra{\widehat{\mb{b}}}U\right),
    \end{aligned}
\end{equation}
such that $\mathbb{E}_{\{U,b_c,\mb{b}\}}\left(\widehat{\rho^t}\right)=\rho^t$. Here $b_c$ and $\mb{b}$ are the measurement results of the control qubit and the final copy from $\rho^{\otimes t}$, respectively; the inverse classical postprocessing $\mc{M}^{-1}$ depends on the random unitary ensemble applied.

Furthermore, to evaluate 
$o_t=\mathrm{tr}(O\rho^{t})$ with $O$ being some observable, the variance shows
\begin{equation}\label{eq:Var}
    \begin{aligned}
      \mathrm{Var}(\widehat{o_t})&=\mathrm{Var}\left[\tr(O\widehat{\rho})\right]+\left[\tr(O\rho)^2-\tr(O\rho^{t})^2\right]\\
       &\leq \|O_0\|^2_{\mathrm{shadow}}+\tr(O\rho)^2
    \end{aligned}
\end{equation}
where $\mathrm{Var}\left[\tr(O\widehat{\rho})\right]$ is the variance of measuring $O$ on the original single-copy shadow snapshot $\widehat{\rho}$, which can be upper bounded by the square of shadow norm $\|O_0\|_{\mathrm{shadow}}$ \cite{huang2020predicting} for the  traceless operator $O_0=O-\tr(O)\id_d/d$.
\end{theorem}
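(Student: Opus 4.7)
The plan is to decouple Theorem~\ref{Th:swapshadow} into two pieces: (i) unbiasedness $\mathbb{E}(\widehat{\rho^t})=\rho^t$, and (ii) the variance identity in Eq.~\eqref{eq:Var} together with its upper bound. Both rely on Eq.~\eqref{eq:hybrid} and the standard properties of the shadow channel $\mc{M}$ and its inverse.

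For unbiasedness, I would expand $\mathbb{E}(\widehat{\rho^t})$ as a sum over $U$, $\mb{b}$, and $b_c$ weighted by the joint outcome probability. Summing $(-1)^{b_c}$ against this joint probability at fixed $(U,\mb{b})$ is precisely the Pauli-$X$ expectation on the control qubit appearing in Eq.~\eqref{eq:hybrid}, which collapses to $\bra{\mb{b}}U\rho^t U^\dag\ket{\mb{b}}$. Pulling $\mc{M}^{-1}$ outside the remaining sums, the identity $\mathbb{E}_U\sum_{\mb{b}}\bra{\mb{b}}U\rho^t U^\dag\ket{\mb{b}}\,U^\dag\ket{\mb{b}}\bra{\mb{b}}U=\mc{M}(\rho^t)$ then yields $\mc{M}^{-1}\mc{M}(\rho^t)=\rho^t$.

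For the variance, the key observation is that $(-1)^{2\widehat{b}_c}=1$, so $\widehat{o_t}^{\,2}$ carries no dependence on $b_c$ and $\mathbb{E}(\widehat{o_t}^{\,2})$ is determined solely by the marginal law of $(U,\mb{b})$. To identify this marginal, I would trace out the control qubit in $\mathrm{CS}_t(\ket{+}_c\bra{+}\otimes\rho^{\otimes t})\mathrm{CS}_t^\dag$; using $S_t\rho^{\otimes t}S_t^\dag=\rho^{\otimes t}$, the residual state is just $\rho^{\otimes t}$, whose last-copy marginal reproduces the ordinary single-copy shadow distribution $\Pr(\mb{b}|U)=\bra{\mb{b}}U\rho U^\dag\ket{\mb{b}}$. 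Consequently $\mathbb{E}(\widehat{o_t}^{\,2})=\mathbb{E}[\tr(O\widehat{\rho})^2]=\mathrm{Var}[\tr(O\widehat{\rho})]+\tr(O\rho)^2$, and subtracting $\mathbb{E}(\widehat{o_t})^2=\tr(O\rho^t)^2$ produces the equality in Eq.~\eqref{eq:Var}. The bound then follows by invoking the Huang--Kueng--Preskill shadow-norm bound $\mathrm{Var}[\tr(O\widehat{\rho})]\leq\|O_0\|_{\mathrm{shadow}}^{2}$ and dropping the nonpositive term $-\tr(O\rho^t)^2$.

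The main subtlety is conceptual rather than computational: the entanglement between the control qubit and the $t$-register produced by $\mathrm{CS}_t$ could a priori distort the marginal distribution of $\mb{b}$, but the $S_t$-invariance of $\rho^{\otimes t}$ makes this distortion vanish and lets the hybrid estimator's second moment be compared cleanly to that of an ordinary single-copy shadow. Everything else reduces to a single use of Eq.~\eqref{eq:hybrid} and the standard shadow-channel inversion identity, so no additional machinery is required.
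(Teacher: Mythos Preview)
Your proposal is correct and mirrors the paper's own proof essentially step for step: summing $(-1)^{b_c}$ against the joint probability to invoke Eq.~\eqref{eq:hybrid} for unbiasedness, then observing $(-1)^{2b_c}=1$ and using the $S_t$-invariance of $\rho^{\otimes t}$ to reduce the second moment to that of an ordinary single-copy shadow, followed by the standard shadow-norm bound. The only cosmetic difference is that the paper writes out the intermediate passage from $O$ to $O_0$ in the final inequality a bit more explicitly, but the content is the same.
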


Theorem \ref{Th:swapshadow} is the central result of this work, which gives the unbiased estimator of $\rho^t$ and also relates the statistical variance to the previous single-copy one, i.e., $t=1$ \cite{huang2020predicting}. 
Note that the shadow norm is also related to the chosen random unitary ensemble \cite{huang2020predicting}. 
According to Eq.~\eqref{eq:Var}, the hybrid shadow can dramatically reduce the variance of estimating nonlinear functions compared with the original shadow protocol. Take the Pauli measurement as an example,  for a $k$-local observable $O$, the shadow norm $\|O_0\|^2_{\mathrm{shadow}}\leq 4^k \|O\|_{\infty}^2$ \cite{huang2020predicting} and thus the variance of evaluating $\mathrm{tr}(O\rho^{t})$ is \emph{independent} of the total qubit number $n$ by Eq.~\eqref {eq:Var}. However, the variance of the original shadow protocol shows an exponential scaling with $n$ \cite{huang2020predicting,elben2020mixedstate}.   This point is also clarified by the numerical result in Fig.~\ref{fig:varianceobs} (a). We will discuss this advantage in detail in the application of quantum error mitigation later.

Furthermore, one can repeat the above procedure for all $\widehat{\rho^{m_i}}$ and patch them together to evaluate more complex functions. For $P_m$ the unbiased estimator now shows
\begin{equation}\label{eq:HshadowM}
    \begin{aligned}
       \tr(S_L\widehat{\rho^{m_1}}\otimes \widehat{\rho^{m_2}}\otimes\cdots \widehat{\rho^{m_L}}).
    \end{aligned}
\end{equation}
With the hybrid framework, one can equivalently transform an $m$-degree function in the original shadow protocol in Eq.~\eqref{eq:shadow} to a lower $L$-degree one here. This not only reduces the sampling and postprocessing cost, but also makes other postprocessing strategies \cite{Brydges2019Probing,Elben2019toolbox} available for higher-degree functions. In particular, the postprocessing cost is reduced from $\mathcal{O}(M^m)$ to $\mathcal{O}(M^L)$. We take the moment estimation of $P_3$ as an example to show these advantages.

Besides the functions like $o_m$ and $P_m$ here,  we give the hybrid shadow estimation for more general functions of Eq.~\eqref{eq:nonlinear} in Ref.~\cite{supplementary}, 
by directly extending Theorem \ref{Th:swapshadow}. 

\emph{Application for the moment estimation}.---
We divide $m=3$ to $2+1$ to estimate $P_3=\tr(S_2\rho^2\otimes\rho)$. For $\widehat{\rho^2}$, by following Algorithm \ref{algo:tShadow} ($K=1$, $t=2$) one collects the shadow set
\begin{equation}\label{}
    \begin{aligned}
       \left\{\widehat{\rho^2}_{(1)}, \widehat{\rho^2}_{(2)}, \cdots \widehat{\rho^2}_{(M)}\right\};
    \end{aligned}
\end{equation}
one also collects the shadow set of $\rho$ using the original shadow estimation, 
\begin{equation}\label{}
    \begin{aligned}
       \left\{\widehat{\rho}_{(1)}, \widehat{\rho}_{(2)}, \cdots \widehat{\rho}_{(M')}\right\},
    \end{aligned}
\end{equation}
and then combines two sets to get the estimator of $P_3$. 
\begin{prop}\label{prop:3mEst}
By combing the shadow sets $\{\widehat{\rho^2}_{(i)}\}$ and $\{\widehat{\rho}_{(i')}\}$, one gets the unbiased estimator of $P_3$ as 
\begin{equation}\label{3mEst}
    \begin{aligned}
       \widehat{P_3}=\frac1{MM'}\sum_{i\in[M],i'\in[M']}\tr(S_2\widehat{\rho^{2}}_{(i)}\otimes \widehat{\rho}_{(i')}).
    \end{aligned}
\end{equation}
Suppose one applies the random Pauli measurements, the variance of $\widehat{P_3}$ can be upper bounded by 
\begin{equation}\label{3mEstVar}
    \begin{aligned}
       \mathrm{Var}\left(\widehat{P_3}\right)\leq \tr(\rho^2)[\frac{d+1}{M}+\tr(\rho^2)\frac{d^3}{M^2}],
    \end{aligned}
\end{equation}
with $M=M'$ for simplicity.
\end{prop}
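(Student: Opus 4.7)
\emph{Proof plan.} The argument splits cleanly into two parts: unbiasedness is a one-line linearity computation, while the variance bound requires a covariance decomposition across the two independent shadow sets followed by an application of Theorem~\ref{Th:swapshadow} and standard Pauli shadow norm estimates.

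For unbiasedness, I would exploit that the experiments producing $\{\widehat{\rho^{2}}_{(i)}\}$ and $\{\widehat{\rho}_{(i')}\}$ are carried out independently. By Theorem~\ref{Th:swapshadow}, $\mathbb{E}[\widehat{\rho^{2}}_{(i)}]=\rho^{2}$, while $\mathbb{E}[\widehat{\rho}_{(i')}]=\rho$ from the standard classical shadow identity. Combined with the identity $\tr(S_{2}\,A\otimes B)=\tr(AB)$ and linearity of expectation, this gives $\mathbb{E}[\widehat{P_{3}}]=\tr(\rho^{2}\rho)=P_{3}$.

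For the variance, set $X_{i,i'}=\tr(S_{2}\,\widehat{\rho^{2}}_{(i)}\otimes\widehat{\rho}_{(i')})=\tr(\widehat{\rho^{2}}_{(i)}\widehat{\rho}_{(i')})$ and expand $\mathrm{Var}(\widehat{P_{3}})$ as a double sum of covariances over index quadruples $(i,i',j,j')$. Using the iid structure within each shadow set and the independence across sets, the partial expectations
\begin{equation*}
\mathbb{E}[X_{i,i'}\mid\widehat{\rho^{2}}_{(i)}]=\tr(\widehat{\rho^{2}}_{(i)}\rho),\qquad \mathbb{E}[X_{i,i'}\mid\widehat{\rho}_{(i')}]=\tr(\rho^{2}\widehat{\rho}_{(i')})
\end{equation*}
kill the fully distinct-index case and reduce the remaining three cases to
\begin{equation*}
\mathrm{Var}(\widehat{P_{3}})\le\frac{\mathrm{Var}(X_{1,1})}{MM'}+\frac{\mathrm{Var}[\tr(\rho\,\widehat{\rho^{2}})]}{M}+\frac{\mathrm{Var}[\tr(\rho^{2}\widehat{\rho})]}{M'}.
\end{equation*}

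Finally I would bound the three variances under random Pauli measurements. The term $\mathrm{Var}[\tr(\rho\,\widehat{\rho^{2}})]$ is controlled by Theorem~\ref{Th:swapshadow} with $O=\rho$ and $t=2$, which reduces it to the single-copy shadow variance $\mathrm{Var}[\tr(\rho\widehat{\rho})]$ plus the correction $P_{2}^{2}-P_{3}^{2}$; the Pauli shadow norm of $\rho$ then bounds the former in terms of $d$ and $P_{2}$. The term $\mathrm{Var}[\tr(\rho^{2}\widehat{\rho})]$ is handled directly by the Pauli shadow norm of $\rho^{2}$. Setting $M=M'$ and collecting these two pieces produces the $(d+1)P_{2}/M$ contribution. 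The main obstacle is the joint term $\mathrm{Var}(X_{1,1})/(MM')$: since $\widehat{\rho^{2}}$ is itself a random operator rather than a fixed observable, I would first condition on $\widehat{\rho^{2}}$, apply the Pauli shadow-norm bound to the (random) observable $\widehat{\rho^{2}}_{0}$, and then take the outer expectation. Evaluating $\mathbb{E}[\|\widehat{\rho^{2}}_{0}\|_{\mathrm{shadow}}^{2}]$ cleanly so that it supplies only the stated $d^{3}P_{2}^{2}$ scaling is where the technical work concentrates; once this is done, summing with the previous contribution gives the claimed bound.
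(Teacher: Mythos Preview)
Your treatment of unbiasedness, the covariance decomposition by index coincidence, and the two single-coincidence terms is exactly the paper's argument: both apply Theorem~\ref{Th:swapshadow} with $O=\rho$ (respectively $O=\rho^{2}$) and then the Pauli shadow-norm bound from Fact~\ref{fact:shadownorm} to obtain the $O(dP_2/M)$ piece.

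The two-coincidence term is where your plan diverges from the paper, and the divergence matters. The paper does \emph{not} condition on $\widehat{\rho^{2}}$ and bound the shadow norm of that realized operator. Instead it rescales $\widehat{\rho^{2}}=P_{2}\,\widehat{\sigma}$ and invokes the ready-made bilinear bound $\mathrm{Var}[\tr(S_{2}\,\widehat{\rho}_{1}\otimes\widehat{\rho}_{2})]\le d^{3}$ from Fact~\ref{fact:shadownorm}. The reason this is legitimate is the second-moment identity hidden in the proof of Theorem~\ref{Th:swapshadow} (Eq.~\eqref{ap:eq:Srhot}): $\mathbb{E}[\tr(O\widehat{\rho^{t}})^{2}]=\mathbb{E}[\tr(O\widehat{\rho})^{2}]$ for every fixed $O$. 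Conditioning on the \emph{other} factor $\widehat{\rho}_{(i')}$ and applying this identity replaces $\widehat{\rho^{2}}$ by an ordinary Pauli shadow of $\rho$, after which the $d^{3}$ bound applies verbatim.

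Your route---condition on $\widehat{\rho^{2}}$ and bound $\|\widehat{\rho^{2}}_{0}\|_{\mathrm{shadow}}^{2}$---does not reach the target. A single Pauli shadow snapshot has $\|\widehat{\rho}\|_{2}^{2}=5^{n}$ (each tensor factor $3\ketbra{\psi}-\id_{2}$ has eigenvalues $2,-1$), so the Pauli shadow-norm estimate gives $\|\widehat{\rho^{2}}_{0}\|_{\mathrm{shadow}}^{2}\le 2^{n}\cdot 5^{n}=10^{n}$, and taking expectation does not help because this is already deterministic. That is strictly worse than $d^{3}=8^{n}$, so the stated bound is not recovered. The fix is precisely the paper's move: use Eq.~\eqref{ap:eq:Srhot} to reduce the hybrid second moment to the ordinary one \emph{before} applying any norm bound, rather than treating the realized snapshot as an observable.
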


The result of Eq.~\eqref{3mEstVar} is almost the same as that of $P_2$ using the original shadow estimation (Eq.~(D16) in Ref.~\cite{elben2020mixedstate}). This indicates that the hybrid framework reduces the statistical error from a 3-degree problem to a 2-degree one, with coherent access to a limited ($t=2$) quantum hardware.

Moreover, one can adopt another postprocessing protocol \cite{Elben2019toolbox} with the same measurement data collected in Algorithm \ref{algo:tShadow}, and the estimator $\widehat{P_3}'$ is given in Proposition \ref{prop:3mEstZ}. This protocol with Pauli measurements mainly works for 2-degree functions and is proved infeasible for higher-degree ones \cite{singlezhou}. With the hybrid framework here, one can make it feasible for the 3-degree function $P_3$ and also reproduce the same variance scaling of $P_2$ in the original protocol \cite{Zhenhuan2022correlation,Elben2019toolbox}, indicating the advantage again.

\begin{prop}\label{prop:3mEstZ}
Using the RM results, $\{\widehat{b_c}^{(i,j)},\widehat{\mb{b}}^{(i,j)}\}$, collected in Algorithm \ref{algo:tShadow} for $t=2$, one can construct an alternative unbiased estimator of $P_3$ as
\begin{equation}\label{3mEstZ}
\begin{aligned}
  \widehat{P_3}'=\frac{1}{MK(K-1)}\sum_{i\in[M]}\sum_{j\neq j'\in[K] }(-1)^{\widehat{b}_c^{(i,j)}}X_{c\backslash p}(\widehat{\mb{b}}^{(i,j)},\widehat{\mb{b}}^{(i,j')}).
\end{aligned}
\end{equation}
Here the choice of the postprocessing function, $X_c(\mb{b},\mb{b}')=-(-d)^{\delta_{\mb{b},\mb{b}'}}$ or $X_p(\mb{b},\mb{b}')=\prod_{k=1}^n -(-2)^{\delta_{b_k,{b_k}'}}$,
depends on the RM primitives, random Clifford or Pauli measurements. 

For Pauli measurements, the variance is about
\begin{equation}\label{3mEstVarZ}
    \begin{aligned}
       \mathrm{Var}\left(\widehat{P_3}'\right)= \mathcal{O}\left(\frac{d^{\log_2 3}}{MK^2}\right).
    \end{aligned}
\end{equation}
\end{prop}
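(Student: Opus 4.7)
\emph{Proof plan for Proposition~\ref{prop:3mEstZ}.} The plan has two parts: unbiasedness and variance. Both parts hinge on combining Eq.~\eqref{eq:hybrid} with the standard Brydges--Elben postprocessing identity, namely that for any two (possibly unnormalized) operators $\sigma_1,\sigma_2$ on $\mc{H}_d$ and any $U$ drawn from the Clifford or Pauli ensemble,
\begin{equation}\label{eq:BE}
\mathbb{E}_U\sum_{\mb{b},\mb{b}'}\bra{\mb{b}}U\sigma_1U^{\dag}\ket{\mb{b}}\bra{\mb{b}'}U\sigma_2U^{\dag}\ket{\mb{b}'}X_{c\backslash p}(\mb{b},\mb{b}')=\tr(\sigma_1\sigma_2).
\end{equation}

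For unbiasedness, I would start from the single-shot joint law $P(b_c,\mb{b}\mid U)$ of the hybrid circuit in Fig.~\ref{Fig:original}(c). Equation~\eqref{eq:hybrid} already asserts $\sum_{b_c}(-1)^{b_c}P(b_c,\mb{b}\mid U)=\bra{\mb{b}}U\rho^{t}U^{\dag}\ket{\mb{b}}$, and tracing out the control register while using $S_t\rho^{\otimes t}S_t^{\dag}=\rho^{\otimes t}$ yields the marginal $\sum_{b_c}P(b_c,\mb{b}\mid U)=\bra{\mb{b}}U\rho U^{\dag}\ket{\mb{b}}$. For fixed $i$ and $j\ne j'$, the two shots are conditionally independent given $U_i$, so each summand of Eq.~\eqref{3mEstZ} satisfies
\[
\mathbb{E}\!\left[(-1)^{\widehat{b}_c^{(i,j)}}X_{c\backslash p}(\widehat{\mb{b}}^{(i,j)},\widehat{\mb{b}}^{(i,j')})\,\Big|\,U_i\right]=\sum_{\mb{b},\mb{b}'}\bra{\mb{b}}U_i\rho^{t}U_i^{\dag}\ket{\mb{b}}\,\bra{\mb{b}'}U_i\rho U_i^{\dag}\ket{\mb{b}'}\,X_{c\backslash p}(\mb{b},\mb{b}').
\]
Plugging $\sigma_1=\rho^{t}=\rho^{2}$ and $\sigma_2=\rho$ into Eq.~\eqref{eq:BE} gives $\tr(\rho^{3})=P_3$, so $\widehat{P_3}'$ is unbiased.

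For the variance, I would follow the $U$-statistic strategy used for the ordinary $P_2$ estimator in Refs.~\cite{Brydges2019Probing,Elben2019toolbox,Zhenhuan2022correlation}. Independence of the $M$ unitaries reduces $\mathrm{Var}(\widehat{P_3}')$ to $M^{-1}\mathrm{Var}(Y_1)$, where $Y_1=[K(K-1)]^{-1}\sum_{j\ne j'}(-1)^{\widehat{b}_c^{(1,j)}}X_{c\backslash p}(\widehat{\mb{b}}^{(1,j)},\widehat{\mb{b}}^{(1,j')})$. I would then expand $\mathbb{E}[Y_1^{2}]$ as a sum over ordered pairs of shot-pairs $\bigl((j,j'),(k,k')\bigr)$, grouped by how many shot indices coincide. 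The all-distinct case factorizes via Eq.~\eqref{eq:BE} into $P_3^{2}$ and cancels the centring, while the one- and two-index overlaps reduce, after $U$-averaging, to per-qubit second moments of $X_p$ of the same kind that produce the $d^{\log_2 3}$ scaling for the $P_2$ estimator, now evaluated with the effective operator $\rho^{t}$ in place of one copy of $\rho$.

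The main obstacle will be verifying that this replacement of $\rho$ by $\rho^{t}$ does not degrade the per-qubit scaling. Concretely, one needs a bound of the form $\mathbb{E}_U\sum_{\mb{b},\mb{b}'}\bigl(\sum_{b_c}(-1)^{b_c}P(b_c,\mb{b}\mid U)\bigr)^{2}\bra{\mb{b}'}U\rho U^{\dag}\ket{\mb{b}'}^{2}\,X_p(\mb{b},\mb{b}')^{2}\lesssim 3^{n}\tr(\rho^{2t})\tr(\rho^{2})$, to be established by factorizing over qubits for Pauli $U$; the extra $(-1)^{b_c}$ sign is bounded in absolute value by $1$ and hence does not inflate the single-qubit variance beyond the familiar constant $3$. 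Combining this with the $1/(MK^{2})$ averaging factor from the $U$-statistic then yields $\mathrm{Var}(\widehat{P_3}')=\mathcal{O}\bigl(d^{\log_2 3}/(MK^{2})\bigr)$.
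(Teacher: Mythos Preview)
Your unbiasedness argument is correct and coincides with the paper's: it computes the two conditional marginals $\sum_{b_c}(-1)^{b_c}P(b_c,\mb{b}\mid U)=\bra{\mb{b}}U\rho^{2}U^{\dag}\ket{\mb{b}}$ and $\sum_{b_c}P(b_c,\mb{b}\mid U)=\bra{\mb{b}}U\rho U^{\dag}\ket{\mb{b}}$, uses conditional independence of shots, and then applies the twirling identity $\mathbb{E}_U U^{\dag\otimes 2}X U^{\otimes 2}=S_2$ (your Eq.~\eqref{eq:BE}).

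For the variance, your overall $U$-statistic strategy (reduce to $M=1$, expand $\mathbb{E}[Y_1^2]$ by shot-index coincidences) is exactly the paper's. But the concrete step you display is not right, and the mechanism you anticipate is off. The quantity you wrote, with $\bigl(\sum_{b_c}(-1)^{b_c}P(b_c,\mb{b}\mid U)\bigr)^{2}$ and $\bra{\mb{b}'}U\rho U^{\dag}\ket{\mb{b}'}^{2}$, never appears: probabilities do not get squared in a second moment. In the dominant two-coincidence case $(j_1,j_1')=(j_2,j_2')$ the sign factor is $(-1)^{b_c^{(j_1)}}(-1)^{b_c^{(j_1)}}=1$ \emph{identically}, not merely bounded by $1$; summing over $b_c$ then yields the plain marginal $\bra{\mb{b}}U\rho U^{\dag}\ket{\mb{b}}$, so this term equals
\[
\tr\!\big[\Phi^{2}_{\mc{E}_p}(X_p^{2})\,\rho\otimes\rho\big],
\]
which is literally the leading variance of the ordinary Brydges--Elben $P_2$ estimator, with no $\rho^{t}$ anywhere. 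The other two-coincidence pattern $(j_1,j_1')=(j_2',j_2)$ uses the signed marginals on both shots and gives $\tr[\Phi^{2}_{\mc{E}_p}(X_p^{2})\,\rho^{2}\otimes\rho^{2}]$, which is dominated by the previous term since $\rho\succeq\rho^{2}$ and $\Phi^{2}_{\mc{E}_p}(X_p^{2})\succeq 0$. Hence the ``obstacle'' you flag---controlling a replacement of $\rho$ by $\rho^{t}$---does not arise in the leading term at all. The $d^{\log_2 3}/(MK^{2})$ scaling then follows directly from the known formula $\Phi^{2}_{\mc{E}_p}(X_p^{2})=\bigotimes_{i=1}^{n}(2\,\id_4+S_2^{(i)})$ and the counting factor $K(K-1)/[K(K-1)]^{2}$.
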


\begin{figure}
    \centering
    \includegraphics[width=8.3cm]{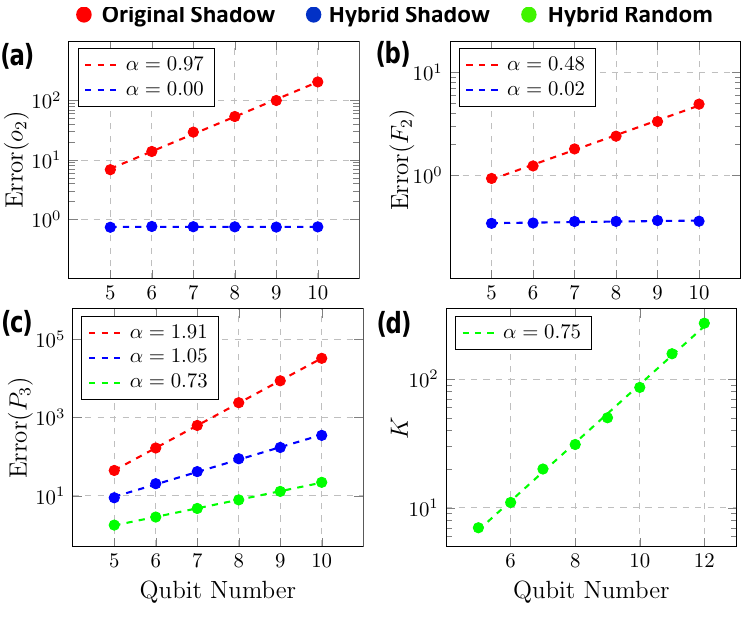}
    \caption{Scaling of errors and the measurement times using OS, HS, and HR protocols. Here the state is the noisy $n$-qubit GHZ state $\rho=0.8\ketbra{\mathrm{GHZ}}{\mathrm{GHZ}}+0.2\mathbb{I}_d/d$. The random Pauli measurements are used for (a), (c), and (d),  and the random Clifford measurements are used for (b). In (a), (b), and (c), we set $M=10$ and $K=1$ for OS and HS, and $M=2$ and $K=5$ for HR. In (a), we estimate local observable $O=\sigma_Z^1\otimes\sigma_Z^2$. In (b), we estimate $F_2=\bra{GHZ}\rho^2\ket{GHZ}$ with $O=\ket{GHZ}\bra{GHZ}$. In (d), we set $M=400$ and find the total measurement is about $MK=200d^{0.75}$ to keep the error less than $0.1$ using HR. }
    \label{fig:varianceobs}
\end{figure}

To complement the above analytical results,
in Fig.~\ref{fig:varianceobs} (c), we numerically study the scaling of the statistical errors for estimating $P_3$, using the estimators $\widehat{P_3}$ in Eq.~\eqref{3mEst} and $\widehat{P_3}'$ in Eq.~\eqref{3mEstZ},
and also the one from the original shadow protocol \cite{elben2020mixedstate}, in the regime $d\gg M(K)$. They  are denoted for short as Hybrid Shadow (HS), Hybrid Random (HR) and Original Shadow (OS), respectively. 
The numerical results, which correspond to the standard variance, are consistent with the analytical ones, and we summarize them together in Table \ref{P3table}.


\begin{table}[htb]
\begin{tabular}
{|c|c|c|c|}
\hline
    $P_3$ & OS & HS & HR  \\
    \hline
     Anal.& $\mathcal{O}(\frac{d^{3}}{M^{1.5}})$ \cite{elben2020mixedstate} & $\mathcal{O}(\frac{d^{1.5}}{M})$ Eq.\eqref{3mEstVar}& $\mathcal{O}(\frac{d^{0.79}}{\sqrt{M}K})$ 
 Eq.\eqref{3mEstVarZ}\\
     \hline
     Numer.& $\mathcal{O}(\frac{d^{1.91}}{M^{1.5}})$ [red]& $\mathcal{O}(\frac{d^{1.05}}{M})$ [blue]& $\mathcal{O}(\frac{d^{0.73}}{\sqrt{M}K})$[green]\\
     \hline
\end{tabular}
\caption{The statistical errors for estimating $P_3$ with different protocols. The numerical results are from Fig.~\ref{fig:varianceobs} (d).
The analytical overestimate of the exponential term on $d$, for two shadow-based protocols OS and HS, due to the fact that the shadow norm is not a tight upper bound for nonlinear functions.}\label{P3table}
\end{table}


Consequently, in practise one needs $M=\mathcal{O}(d^{1.27})$ for OS, $M=\mathcal{O}(d^{1.05})$ for HS, and $K=\mathcal{O}(d^{0.73})$ for HR to make the error less than some constant. It is clear that HS and HR from the hybrid framework both show an advantage compared to OS, and HR is the most efficient one for $P_3$. In Fig.~\ref{fig:varianceobs} (d), we further find the total number of measurements is about $MK=200d^{0.75}$ to make $\mathrm{Error}(P_3)\leq 0.1$, showing great enhancement than OS protocol \cite{elben2020mixedstate,rath2021Fisher}. And the advantage of the hybrid framework is more significant for measuring higher-order moments like $P_4$, and we leave more discussions in Ref.~\cite{supplementary}.

\comments{
\begin{tabular}{|c|c|c|c|}
\hline
    $P_3$ & OS & HS & HR  \\
    \hline
     Anal.& $\mathcal{O}(d^{3}M^{-1.5})$ & $d^{1.5}M^{-1}$ & $d^{\log_2 3/2}M^{-0.5}K^{-1}$\\
     \hline
     Numm.& $d^{1.91}M^{-1.5}$ & $d^{1.05}M^{-1}$ & $d^{0.73}M^{-0.5}K^{-1}$\\
     \hline
\end{tabular}

\begin{tabular}{|c|c|c|c|}
\hline
    $P_3$ & OS & HS & HR  \\
    \hline
     Anal.& $\mathcal{O}(\frac{d^{3}}{M^{1.5}})$ \cite{elben2020mixedstate} & $\mathcal{O}(\frac{d^{1.5}}{M})$ [Eq.~\eqref{3mEstVar}]& $\mathcal{O}(\frac{d^{\log_2 3/2}}{\sqrt{M}K})$ [Eq.~\eqref{3mEstVarZ}]\\
     \hline
     Numm.& $\mathcal{O}(\frac{d^{1.91}}{M^{1.5}})$ [red]& $\mathcal{O}(\frac{d^{1.05}}{M})$ [blue]& $\mathcal{O}(\frac{d^{0.73}}{\sqrt{M}K})$[green]\\
     \hline
\end{tabular}
}


\emph{Application in quantum error mitigation}.---
Recently purified-based methods are proposed \cite{Huggins2021Virtual,Koczor2021Exponential} for quantum error mitigation \cite{temme2017error,Suguru2018Practical,kandala2019error,endo2021hybrid} 
The central task there is to estimate $o_m:=\tr(O\rho^m)$, and use the normalized value $o_m/P_m\rightarrow \tr(O\Psi)$ to approach the target value $\tr(O\Psi)$ with $\Psi$ the noiseless state, which can suppress the error exponentially with the copy-number $m$.  The original protocol is based on the swap test to measure $o_m$, and very recently there have been ones by shadow estimation \cite{Seif2022Shadow,Hu2022Logical}.  There is also an experimental advance for that of $m=2$ \cite{Brien2022purification},  however,  similarly it is still challenging to extend both approaches to $m\geq 3$. Here we show that the hybrid framework gives various advantages on estimating $o_m$.

Suppose one has access to the coherent operation on $m$-copy quantum states. By only adopting the classical postprocessing, the shadow set collected in Algorithm \ref{algo:tShadow} for $t=m$ can be reused for many different observables $\{O_i\}$, say totally $T$ ones, and the estimation error scales like $\mathcal{O}(\log(T))$ using the median-of-mean technique \cite{huang2020predicting}. However, in principle the swap test approach should adopt different quantum circuits for different observables \cite{Koczor2021Exponential}, and also the error would scale linearly $\mathcal{O}(T)$. This advantage is significant for quantum chemistry simulation with the polynomial number of terms in Hamiltonian \cite{McArdle2020chemistry,google2020Hartree,Brien2022purification}.On the other hand, compared to OS protocol \cite{Seif2022Shadow}, HS significantly reduces the statistical variance and thus the sampling cost. For instance, suppose one applies Pauli measurements with OS protocol to estimate $o_m$ for a local observable $O$. Since $o_m=\tr(O_m\rho^{\otimes m})$ and $O_m:=\frac{1}{2}(OS_m+S_m^\dagger O)$ the symmetrized observable is actually a global one with locality about $mn$ \cite{huang2020predicting,elben2020mixedstate}, and thus the shadow norm scales exponentially with $n$. While in HS protocol, the variance is independent of the qubit number $n$ by Eq.~\eqref{eq:Var}. See Fig.~\ref{fig:varianceobs} (a) for this exponential advantage as $t=2$, 
and similar advantage also appears in the fidelity estimation using the Clifford measurements as shown in Fig.~\ref{fig:varianceobs} (b), with more discussions left in Ref.~\cite{supplementary}.

In reality, one generally can not implement Algorithm \ref{algo:tShadow} directly for $t=m$ when $m\ge 3$ due to the hardware limitation. Like the moment estimation, one can alternatively patch low-degree snapshots to measure higher-degree $o_m$. For instance, when $m=3$ and $t=2$, similar as Eq.~\eqref{3mEst}, one can construct the estimator $\widehat{o_3}=(MM')^{-1}\sum_{i,i'}\mathrm{tr}(O_2 \widehat{\rho^{2}}_{(i)}\otimes \widehat{\rho}_{(i')})$,
\comments{
\begin{equation}\label{3mEstO}
    \begin{aligned}
       \widehat{o_3}=\frac1{MM'}\sum_{i\in[M],i'\in[M']}\tr(O_2 \widehat{\rho^{2}}_{(i)}\otimes \widehat{\rho}_{(i')})
    \end{aligned}
\end{equation}
}
and the error also scales logarithmically with the total number of observebles.
In addition, one can construct an alternative estimator $\widehat{o_3}'$ by following Eq.~\eqref{3mEstZ}. Similar as in the moment-estimation of $P_3$, both estimators show statistical error advantages compared to the original shadow. The details of the unbiased estimator for general observable $O$ and  numerical results are left in Ref.~\cite{supplementary}.


\comments{
\begin{figure}
    \centering
    \includegraphics[scale=0.45]{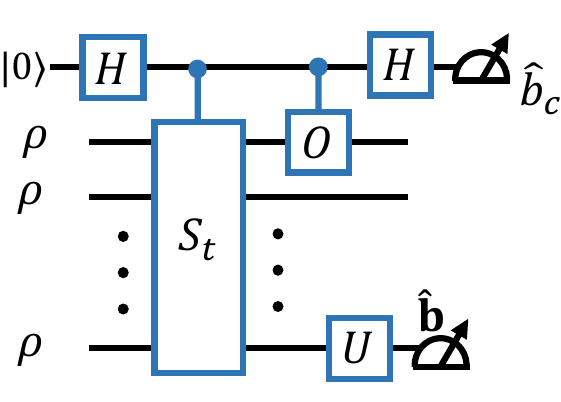}
    \caption{
    Compared with the circuit in Fig.~\ref{Fig:original}, one additionally performs a Controlled-$O$ operation. Here the observable $O$ is also assumed to be a unitary, for instance, the Pauli observable. The quantum circuit and hybrid measurement procedure for more general observable $O$ is given in Ref.~\cite{supplementary}.
    }
    \label{Fig:miti}
\end{figure}
}

\emph{Concluding remarks}.---The hybrid framework proposed here utilizes the partial coherent power of quantum devices and can act as a subroutine for many quantum information tasks, for instance, measuring entanglement \cite{ketterer2019characterizing,neven2021symmetry,Yu2021Optimal,liu2022detecting}, characterizing quantum chaos \cite{Joshi2022Probing,McGinley2022scrambling},  
and constructing quantum algorithms \cite{Xiao2021Hybrid,Lubasch2020nonlinear,Yamamoto2021metrology}.
Note that the framework reduces to previous RM protocols as $t=1$ in Algorithm \ref{algo:tShadow}, and the advantage essentially comes from the coherent processing the few-copy state. So it is intriguing to build the ultimate result of this replica advantage \cite{chen2022exponential,huang2022Science} considering the limited quantum memory. Moreover, it is also appealing to extend the current framework to quantum channel \cite{Chen2021Robust,Helsen2021Estimating,Kunjummen2021process,Levy2021Process} and boson or fermion systems \cite{Elben2018Random,Zhao2021Fermionic}.

\emph{Acknowledgements}---
Y.Z. is supported by National Natural Science Foundation of China(NSFC) Grant No. 12205048 and the start-up funding of Fudan University.  Z.L. is supported by NSFC Grants No.~11875173 and No.~12174216 and the National Key Research and Development Program of China Grants No.~2019QY0702 and No.~2017YFA0303903.


%

\onecolumngrid
\newpage

\renewcommand{\addcontentsline}{\oldacl}
\renewcommand{\tocname}{Appendix Contents}
\tableofcontents


\begin{appendix}
In this Supplementary Material,  we give the proof and more discussions and generalizations of the results in main text. In Sec.~\ref{ap:hybrid}, we mainly prove the central result Theorem 1 in main text. In Sec.~\ref{ap:frame}, we further generalize Theorem 1, and show the hybrid estimation framework for general nonlinear functions defined in Eq.~(1) in main text. Sec.~\ref{sec:ap:Pm} considers the application to the state-moment estimation. Sec.~\ref{ap:sec:miti} discusses the application to the virtual distillation, and the hybrid measurement procedure for the general Hermitian operator $O$ is also given.

\section{RM hybrid with swap test}\label{ap:hybrid}

\subsection{Proof of Theorem 1 $(M=1,K=1)$}\label{ap:Th1}
Here we prove Theorem 1 in main text, related to the single-shot unbiased estimator of $\rho^t$, that is, the $M=1, K=1$ case in Algorithm 1 . And then we extend the result for general $M$ and $K$ in the following subsection.

First, we prove the estimator in Eq.~(6).
\begin{proof}
We take the expectation value of the estimator on the random unitary $U$ and the projective measurement result $\mb{b},b_c$ to get
\begin{equation}\label{eq:ap:est}
    \begin{aligned}
       \mathbb{E}_{\{U,\mb{b},b_c\}}\ \widehat{\rho^t}& =\mathbb{E}_{\{U,\mb{b},b_c\}}\ (-1)^{\widehat{b}_c}\cdot \mc{M}^{-1}\left(U^{\dag}\ket{\widehat{\mb{b}}}\bra{\widehat{\mb{b}}}U\right)\\
       &=\sum_{U,\mb{b},b_c} \mathrm{Pr}(U,\mb{b},b_c)\ (-1)^{b_c}\mc{M}^{-1}\left(U^{\dag}\ket{\mb{b}}\bra{\mb{b}}U\right)\\
       &=\sum_{U} \mathrm{Pr}(U) \sum_{\mb{b},b_c}\mathrm{Pr}(\mb{b},b_c|U)\  (-1)^{b_c}\mc{M}^{-1}\left(U^{\dag}\ket{\mb{b}}\bra{\mb{b}}U\right)
    \end{aligned}
\end{equation}
with $\mathrm{Pr}(U,\mb{b},b_c)$ the joint probability distribution for these random variables.
Given $U$ and $\mb{b}$, one can first sum the index $b_c$ as 
\begin{equation}\label{eq:ap:bc}
    \begin{aligned}
       \sum_{b_c}\mathrm{Pr}(\mb{b},b_c|U)\  (-1)^{b_c}=&\tr[ \bra{+}_c \bra{\mb{b}}U\  \mathrm{CS}_t\left(\ket{+}_c\bra{+}\otimes \rho^{\otimes t}\right)  \mathrm{CS}_t^{\dag}\ U^{\dag}\ket{\mb{b}}\ket{+}_c]\\
       &-\tr[ \bra{-}_c \bra{\mb{b}}U\  \mathrm{CS}_t \left(\ket{+}_c\bra{+}\otimes \rho^{\otimes t}\right)  \mathrm{CS}_t^{\dag}\ U^{\dag}\ket{\mb{b}}\ket{-}_c]\\
     =&\tr[ X_c \bra{\mb{b}}U\  \mathrm{CS}_t\left(\ket{+}_c\bra{+}\otimes \rho^{\otimes t}\right)  \mathrm{CS}_t^{\dag}\ U^{\dag}\ket{\mb{b}}]\\
     =&\bra{\mb{b}}U\rho^tU^{\dag}\ket{\mb{b}}.
    \end{aligned}
\end{equation}
where in the first line we insert the probability $\mathrm{Pr}(\mb{b},0/1|U)$ accounting for the measurement result $0/1$ of the control qubit in the $X$-basis, and the final line is by Eq.~(5) in main text. Inserting the result of Eq.~\eqref{eq:ap:bc} into the final line of Eq.~\eqref{eq:ap:est}, one gets
\begin{equation}\label{}
    \begin{aligned}
       \mathbb{E}_{\{U,\mb{b},b_c\}}\ \widehat{\rho^t}=\sum_{U} \sum_{\mb{b}} \mathrm{Pr}(U) \bra{\mb{b}}U\rho^t U^{\dag}\ket{\mb{b}}  \mc{M}^{-1}\left(U^{\dag}\ket{\mb{b}}\bra{\mb{b}}U\right)=\rho^t.
    \end{aligned}
\end{equation}
The equality holds due to the following fact: by taking $\sigma=\rho^t$, the equation is just the definition of the unbiased estimator for  $\sigma$ in the original shadow protocol \cite{huang2020predicting}.
\end{proof}


Second, we prove the result of variance in Eq.~(7).

\begin{figure}[hbt]
\centering
\resizebox{12cm}{!}{\includegraphics[scale=1]{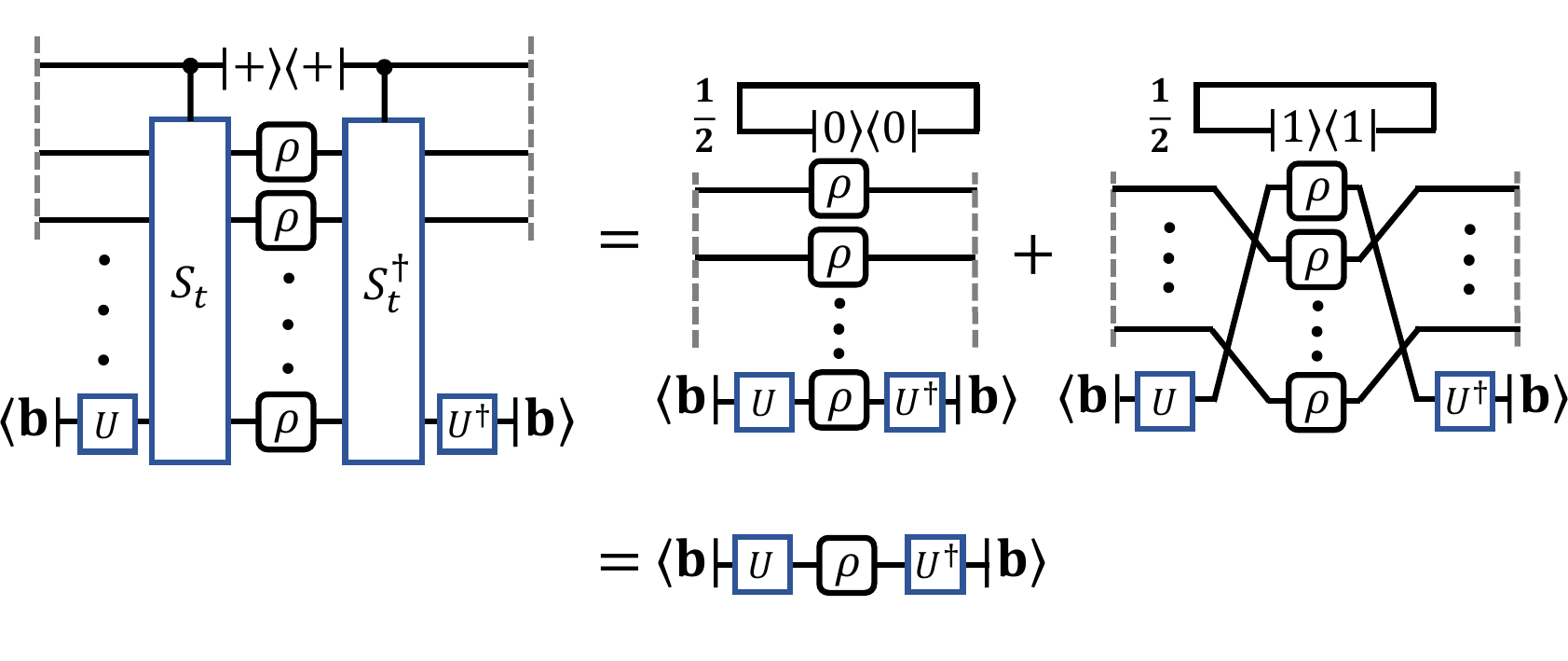}}
\caption{The demonstration of the derivation of Eq.~\eqref{ap:eq:minus} by tensor diagram. The vertical dotted lines denote the periodic boundary condition, i,e., the trace operation. The shift operation $S_t$ and also its conjugate $S_t^{\dag}$ are represented by the permutation of the indices of the $t$-copy. On account of the identity $\id_c$ (partial trace) on the control qubit, there only two terms corresponding to $\ket{0}_c\bra{0}$ and $\ket{1}_c\bra{1}$ from $\ket{+}_c\bra{+}$ survive, and two terms both return $\bra{\mb{b}}U\rho U^{\dag}\ket{\mb{b}}$, i.e., conducting the RM on $\rho$. }\label{Fig:diagonal} 
\end{figure}

\begin{proof}
By definition, the variance of $\tr(O\widehat{\rho^{t}})$ is
\begin{equation}
    \begin{aligned}
      \mathrm{Var}\left[\tr(O\widehat{\rho^{t}})\right]=\mathbb{E}\left[ \tr(O\widehat{\rho^{t}})^2\right]- \tr(O\rho^{t})^2
    \end{aligned}
\end{equation}
with the first term being
\begin{equation}\label{ap:eq:Var}
    \begin{aligned}
       \mathbb{E}\left[ \tr(O\widehat{\rho^{t}})^2\right]&=\mathbb{E}\ \tr\left[O(-1)^{\widehat{b}_c}\cdot \mc{M}^{-1}\left(U^{\dag}\ket{\widehat{\mb{b}}}\bra{\widehat{\mb{b}}}U\right)\right]^2\\
       &=\mathbb{E}\ (-1)^{2\widehat{b}_c} \bra{\widehat{\mb{b}}}U\mc{M}^{-1}(O)U^{\dag}\ket{\widehat{\mb{b}}}^2\\
       &=\sum_{U} \mathrm{Pr}(U) \sum_{\mb{b},b_c}\mathrm{Pr}(\mb{b},b_c|U)\  \bra{\mb{b}}U\mc{M}^{-1}(O)U^{\dag}\ket{\mb{b}}^2,
    \end{aligned}
\end{equation}
where we insert the definition of $\widehat{\rho^{t}}$ of Eq.~(6) in main text in the first line and use the self-adjoint property of $\mc{M}^{-1}$ to equivalently act it on $O$ in the second line. For simplicity, we omit the subscript $\{U,\mb{b},b_c\}$ in the expectation. The interesting point is that the sign with respective to $b_c$ disappears due to the square. By first summing the index $b_c$, which implies taking a partial trace on the control qubit, one has
\begin{equation}\label{ap:eq:minus}
    \begin{aligned}
       \sum_{b_c} \mathrm{Pr}(\mb{b},b_c|U)=&\tr[ \bra{+}_c \bra{\mb{b}}U\  \mathrm{CS}_t\left(\ket{+}_c\bra{+}\otimes \rho^{\otimes t}\right)  \mathrm{CS}_t^{\dag}\ U^{\dag}\ket{\mb{b}}\ket{+}_c]+\tr[ \bra{-}_c \bra{\mb{b}}U\  \mathrm{CS}_t\left(\ket{+}_c\bra{+}\otimes \rho^{\otimes t} \right) \mathrm{CS}_t^{\dag}\ U^{\dag}\ket{\mb{b}}\ket{-}_c]\\
     =&\tr[ \id_c \bra{\mb{b}}U\  \mathrm{CS}_t\left(\ket{+}_c\bra{+}\otimes \rho^{\otimes t}\right)  \mathrm{CS}_t^{\dag}\ U^{\dag}\ket{\mb{b}}]\\
     =&\frac1{2}\tr[ \bra{\mb{b}}U\  \mathrm{CS}_t\left(\ket{0}_c\bra{0}\otimes \rho^{\otimes t}\right)  \mathrm{CS}_t^{\dag}\ U^{\dag}\ket{\mb{b}}]+\frac1{2}\tr[ \bra{\mb{b}}U\  \mathrm{CS}_t\left(\ket{1}_c\bra{1}\otimes \rho^{\otimes t}\right)  \mathrm{CS}_t^{\dag}\ U^{\dag}\ket{\mb{b}}]\\
     =&\frac1{2}\tr[ \bra{\mb{b}}U\ \left(\ket{0}_c\bra{0}\otimes \rho^{\otimes t} \right) U^{\dag}\ket{\mb{b}}]+\frac1{2}\tr[ \bra{\mb{b}}U\left(  \ket{1}_c\bra{1}\otimes S_t\rho^{\otimes t} S_t^{\dag}\right) U^{\dag}\ket{\mb{b}}]\\
     =&\bra{\mb{b}}U\rho U^{\dag}\ket{\mb{b}},
    \end{aligned}
\end{equation}
where in the third line only two terms corresponding to the diagonal terms $\ket{0}_c\bra{0}$ and $\ket{1}_c\bra{1}$ from $\ket{+}_c\bra{+}$ survive due to the partial trace, in the last line we use $S_t\rho^{\otimes t} S_t^{\dag}=\rho^{\otimes t}$. See Fig.~\ref{Fig:diagonal} for a diagram representation of the derivation.

By inserting Eq.~\eqref{ap:eq:minus} into the last line of Eq.~\eqref{ap:eq:Var}, one gets
\begin{equation}\label{ap:eq:Srhot}
    \begin{aligned}
       \mathbb{E}\left[ \tr(O\widehat{\rho^{t}})^2\right]
       &=\sum_{U} \mathrm{Pr}(U) \sum_{\mb{b},b_c}\mathrm{Pr}(\mb{b},b_c|U)\  \bra{\mb{b}}U\mc{M}^{-1}(O)U^{\dag}\ket{\mb{b}}^2\\
       &=\sum_{U} \sum_{\mb{b}} \mathrm{Pr}(U) \bra{\mb{b}}U\rho U^{\dag}\ket{\mb{b}} \bra{\mb{b}}U\mc{M}^{-1}(O)U^{\dag}\ket{\mb{b}}^2\\
       &=\sum_{U} \sum_{\mb{b}} \mathrm{Pr}(U) \mathrm{Pr}(\mb{b}|U)_{(\rho)} \tr[O\ \mc{M}^{-1}(U^{\dag}\ket{\mb{b}}\bra{\mb{b}}U)]^2\\
       &=\mathbb{E}\left[ \tr(O\widehat{\rho})^2\right].
        \end{aligned}
\end{equation}
The last line shows that the expectation value just equals the one in the original shadow protocol on a single-copy of the state $\rho$, no matter what value $t$ takes.

As a result,
\begin{equation}\label{ap:eq:Shnorm1}
    \begin{aligned}
      \mathrm{Var}\left[\tr(O\widehat{\rho^{t}})\right]&=\mathbb{E}\left[\tr(O\widehat{\rho})^2\right]-\tr(O\rho^{t})^2\\
    \end{aligned}
\end{equation}
Note that $\mathbb{E}\ \tr(O\widehat{\rho})^2$ has already been analysed by maximizing it for all possible $\rho$, which serves as an upper bound denoted by the square of the shadow norm $\|O\|_{\mathrm{shadow}}$ \cite{huang2020predicting}. 
Consequently, by ignoring the second term $\tr(O\rho^{t})^2$, one has the upper bound for the variance for any $t$ as
\begin{equation}\label{ap:eq:Shnorm2}
    \begin{aligned}
      \mathrm{Var}\left[\tr(O\widehat{\rho^{t}})\right]\leq \|O\|^2_{\mathrm{shadow}}.
    \end{aligned}
\end{equation}
In addition, since shifting the operator to its traceless part $O_0=O-\tr(O)\id/d$ does not change the variance for $\mathrm{Var}[\tr(O\widehat{\rho})]$, one could further bound Eq.~\eqref{ap:eq:Shnorm1} as
\begin{equation}\label{ap:eq:Shnorm3}
    \begin{aligned}
    \mathrm{Var}\left[\tr(O\widehat{\rho^{t}})\right]
      &=\mathrm{Var}[\tr(O\widehat{\rho})]+\tr(O\rho)^2-\tr(O\rho^{t})^2\\
      &=\mathrm{Var}\left[\tr(O_0\widehat{\rho})\right]+\tr(O\rho)^2-\tr(O\rho^{t})^2\\
      &=\mathbb{E}\ \tr(O_0\widehat{\rho})^2-\tr(O_0\rho)^2+\tr(O\rho)^2-\tr(O\rho^{t})^2\\
      &\leq \|O_0\|^2_{\mathrm{shadow}}+\tr(O\rho)^2.
    \end{aligned}
\end{equation}
\end{proof}

We remark that this reduction of variance of $\widehat{\rho^{t}}$ to the $t=1$ case is since one has the ability to coherently control $t$-copy of the state.
The detailed expression of $\|O\|_{\mathrm{shadow}}$ actually depends on the RM primitives. We show in latter section that by using Eq.~\eqref{ap:eq:Shnorm3} and the results of shadow norm, one can further give upper bounds for variance of more complex estimators, which are combined from a few of $\widehat{\rho^{t}}$.

\subsection{Unbiased estimator for general $M$ and $K$}
In Algorithm 1  in main text, one samples $M$ independent $U\in \mc{H}_d$ from some unitary ensemble $\mc{E}$, and for a given $U$, i.e., the measurement setting, one repeats the quantum circuit in Fig.~1 (c) in main text for $K$ shots to collect the measurement results $b_c^{(i,j)}$ and $\mb{b}^{(i,j)}$. Here the superscript $i$ labels the measurement setting, and $j$ labels the shot. We list the estimator in the following proposition.
\begin{prop}\label{ap:prop:MK}
Following the procedure in Algorithm 1  in main text, one can collect $M$ independent estimators of $\rho^{t}$, i.e., $\left\{\widehat{\rho^t}_{(1)}, \widehat{\rho^t}_{(2)}, \cdots \widehat{\rho^t}_{(M)}\right\}$ with 
\begin{equation}\label{}
    \begin{aligned}
\widehat{\rho^t}_{(i)}=\frac1{K}\sum_j\widehat{\rho^t}^{(j)}_{(i)}.
    \end{aligned}
\end{equation}
Here each $\widehat{\rho^t}^{(j)}_{(i)}$ is constructed according to Eq.~(6) in Theorem 1  by the $j$-th measurement result $b_c^{(i,j)}$ and $\mb{b}^{(i,j)}$ under $i$-th measurement setting. To estimate the value of $\tr(O\rho^t)$, one can further average total $M$ rounds to get the unbiased estimator 
\begin{equation}\label{ap:eq:O}
    \begin{aligned}
\tr(O \frac{\sum_i \widehat{\rho^t}_{(i)}}{M}).
    \end{aligned}
\end{equation}

\end{prop}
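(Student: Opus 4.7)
The plan is to reduce the general $(M,K)$ claim to the single-shot case already established in Theorem \ref{Th:swapshadow}, using only linearity of expectation and the linearity of the trace. The key observation is that, conditional on a fixed measurement setting $U^{(i)}$ drawn in round $i$, the $K$ shots $\{(b_c^{(i,j)},\mb{b}^{(i,j)})\}_{j=1}^K$ are independent and identically distributed, and across rounds the $U^{(i)}$ are i.i.d. from the ensemble $\mc{E}$. Therefore each single-shot snapshot $\widehat{\rho^t}^{(j)}_{(i)}$ constructed via Eq.~\eqref{eq:K1shadow} is drawn from exactly the same joint distribution over $(U,b_c,\mb{b})$ analysed in the proof of Theorem \ref{Th:swapshadow}.

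First I would apply Theorem \ref{Th:swapshadow} directly to each snapshot to obtain
\begin{equation*}
\mathbb{E}\!\left[\widehat{\rho^t}^{(j)}_{(i)}\right] = \rho^t \qquad \forall\, i\in[M],\ j\in[K].
\end{equation*}
Then, by linearity of expectation, the per-round average satisfies $\mathbb{E}[\widehat{\rho^t}_{(i)}] = \tfrac1K\sum_j \mathbb{E}[\widehat{\rho^t}^{(j)}_{(i)}] = \rho^t$, and averaging over the $M$ rounds gives $\mathbb{E}\!\left[\tfrac1M\sum_i \widehat{\rho^t}_{(i)}\right] = \rho^t$. Pulling the fixed operator $O$ and the (linear) trace through the expectation, one concludes
\begin{equation*}
\mathbb{E}\!\left[\tr\!\left(O\,\tfrac{1}{M}\sum_i \widehat{\rho^t}_{(i)}\right)\right] = \tr(O\rho^t),
\end{equation*}
which is exactly the unbiasedness asserted in Eq.~\eqref{ap:eq:O}.

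There is no real obstacle here beyond bookkeeping: the content of the statement is just the $(M,K)=(1,1)$ identity from Theorem \ref{Th:swapshadow} combined with i.i.d. averaging. The only subtle point worth flagging in the write-up is that the joint law of $(U^{(i)},b_c^{(i,j)},\mb{b}^{(i,j)})$ factorises as $\Pr(U^{(i)})\Pr(b_c^{(i,j)},\mb{b}^{(i,j)}\mid U^{(i)})$ with the conditional being identical for every shot $j$, so the proof of Theorem \ref{Th:swapshadow} is literally reused term by term. If desired, one could immediately follow up with a variance bound via independence across $i$ and across $j$ conditional on $U^{(i)}$, invoking Eq.~\eqref{eq:Var}, but the proposition as stated only requires the unbiasedness argument above.
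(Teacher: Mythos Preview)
Your proposal is correct and matches the paper's own proof essentially verbatim: the paper simply invokes Theorem~\ref{Th:swapshadow} to get $\mathbb{E}\,\widehat{\rho^t}^{(j)}_{(i)}=\rho^t$ for every $i,j$, then notes that the remaining estimators are just averages, which is exactly the linearity-of-expectation argument you wrote out. Your additional remarks on the factorisation of the joint law and the optional variance follow-up are fine but not needed for the proposition as stated.
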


\begin{proof}
In Appendix~\ref{ap:Th1}, we have proved the result in Theorem 1  that $\mathbb{E}_{\{U,\mb{b},b_c\}}\ \widehat{\rho^t}^{(j)}_{(i)}=\rho^t$ for the single-shot case, that is, for any $i,j$. Thus the estimators here directly hold since they only involve average operations.
\end{proof}
To evaluate the variance of the estimator in Eq.~\eqref{ap:eq:O}. Note that $\left\{\widehat{\rho^t}_{(1)}, \widehat{\rho^t}_{(2)}, \cdots \widehat{\rho^t}_{(M)}\right\}$ are 
independent and identically distributed (i.i.d.) random variables. As a result,
\begin{equation}\label{ap:eq:MKvar}
    \begin{aligned}
\mathrm{Var}\left[\tr(O \sum_i \widehat{\rho^t}_{(i)}/M)\right]=\frac1{M}\mathrm{Var}\left[\tr(O  \widehat{\rho^t}_{(i_0)})\right]
    \end{aligned}
\end{equation}
for any $i_0$. 

As a result, for the general $M$ and $K=1$ (which is the case considered in the original shadow estimation \cite{huang2020predicting}), one can directly use the single-shot result in Eq.~(7) in main text to bound the variance by dividing it by $M$. We leave the variance for general $K$ to further study. Hereafter, our discussion will mainly focus on the single-shot scenario with shadow-type postprocessing.

\section{Hybrid framework for general nonlinear functions}\label{ap:frame}
In this section, we show that our hybrid framework can work for more general nonlinear functions in the following form,
\begin{equation}\label{ap:Fm}
    \begin{aligned}
       F_m(\{\rho_i\},\{O_i\})=\tr(\rho_1 O_1 \rho_2O_2 \cdots \rho_mO_m ),
    \end{aligned}
\end{equation} 
for quantum states $\{\rho_1,\rho_2\cdots\rho_m\}$ with some observables $\{O_1,O_2\cdots O_m\}$. We call it a $m$-degree function. Here for simplicity we assume that $O_i$ is both Hermitian and unitary (e.g., Pauli operators of practical interest) in the following discussion.

\subsection{Swap test and RM for $F_m$}\label{ap:secSwap}
Like in main text, let us first briefly recast the generalized swap test and RM to get $F_m(\{\rho_i\},\{O_i\})$, respectively.

\begin{figure}[hbt]
\centering
\resizebox{6cm}{!}{\includegraphics[scale=0.6]{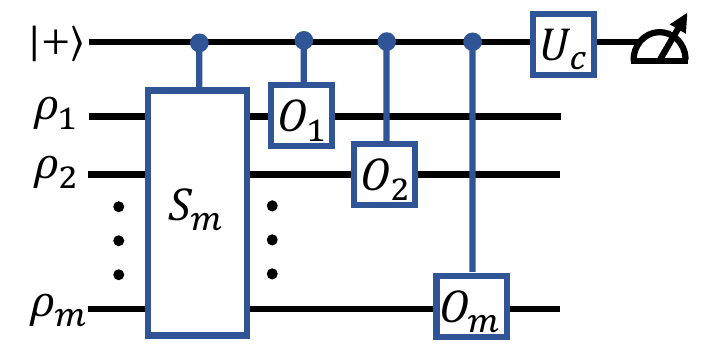}}
\caption{The quantum circuit of the swap test to measure $F_m(\{\rho_i\},\{O_i\})$ in Eq.~\eqref{ap:Fm}. Compared to the circuit in Fig.~1 (a), one additionally adds $\mathrm{\mathrm{CO_i}}$ operations. The unitary $U_c$ on the control qubit before the projective measurement determines the measurement basis. For the $X$-basis, $U_c=H$ as in main text; for the $Y$-basis, $U_c=e^{-i\frac{\pi}{4}Z}H$.}\label{Fig:FmSwap} 
\end{figure}

In the swap test, one prepares the quantum state $\bigotimes_{i=1}^m \rho_i$ in parallel and initializes the control qubit, and further operates Controlled-$O_i$ gate, denoted as $\mathrm{\mathrm{CO_i}}$, for each $\rho_i$ after the Controlled-shift operation $ \mathrm{CS}_m$ in the quantum circuit, as shown in Fig.~\ref{Fig:FmSwap}.  Measuring the expectation value of control qubit using Pauli-$X$ operator, one gets
\begin{equation}\label{ap:eq:mswap}
    \begin{aligned}
     &\tr[ X_c\ \prod_i \mathrm{CO_i}\  \mathrm{CS}_m\left(\ket{+}_c\bra{+}\otimes\bigotimes_{i=1}^m \rho_i\right)  \mathrm{CS}_m^{\dag}\ \prod_i \mathrm{CO_i}]\\
     =& \frac1{2}\tr[X_c\ket{1}_c\bra{0}]*\tr[\left(\bigotimes_i O_i\right) S_m \left(\bigotimes_{i=1}^m \rho_i\right)] +\frac1{2} \tr[X_c\ket{0}_c\bra{1}]*\tr[\left(\bigotimes_{i=1}^m \rho_i\right) S_t^\dag \left(\bigotimes_i O_i\right)]\\
     =&\frac1{2}\tr[\rho_1O_1 \rho_2O_2 \cdots \rho_mO_m ] +\frac1{2}\tr[O_m\rho_m \cdots O_2\rho_2  O_1 \rho_1]=\frac1{2}(F_m+F_m^*),
    \end{aligned}
\end{equation}
which is the average of $F_m$ and its conjugation $F_m^*$.

By additionally introducing the Pauli-$Y$ measurement on the control qubit, one has
\begin{equation}\label{ap:eq:mswapY}
    \begin{aligned}
     &\tr[ Y_c\ \prod_i \mathrm{CO_i}\  \mathrm{CS}_m\left( \ket{+}_c\bra{+}\bigotimes_{i=1}^m \rho_i\right)  \mathrm{CS}_m^{\dag}\ \prod_i \mathrm{CO_i}]\\
     =& \frac1{2}\tr[Y_c\ket{1}_c\bra{0}]*\tr[\left(\bigotimes_i O_i\right) S_m \left(\bigotimes_{i=1}^m \rho_i\right)] +\frac1{2} \tr[Y_c\ket{0}_c\bra{1}]*\tr[\left(\bigotimes_{i=1}^m \rho_i\right) S_m^\dag \left(\bigotimes_i O_i\right)]=\frac{-i}{2}(F_m-F_m^*).
    \end{aligned}
\end{equation}
By repeating the measurements and combining the Pauli-$X$ and $Y$ measurement results, one can obtain $F_m$.

For the RM with shadow estimation, one can construct the following unbiased estimator,
\begin{equation}\label{ap:eq:shadow}
    \begin{aligned}
       \tr(\bigotimes_{i=1}^m O_iS_m\ \widehat{\rho_1}\otimes \widehat{\rho_2}\otimes\cdots \widehat{\rho_m})=\tr(\widehat{\rho_1}O_1\widehat{\rho_2}O_2\cdots \widehat{\rho_m}O_m),
    \end{aligned}
\end{equation}
where $\widehat{\rho_i}$ is the snapshot for $\rho_i$ based on the shadow protocol as shown in Eq.~(3) in main text. And one can further increase the shadow size to reduce the statistical fluctuation of the estimator.

As mentioned in main text, both methods are challenging to measure $F_m$ for a large degree $m$. For the swap test, it is hard to prepare so many copies of the state and realize the Controlled-shift gate among them; for the RM, the measurement times and the postprocessing budgets are quite annoying. The following hybrid framework can trade off them, and thus shows advantages.

\subsection{The hybrid framework}\label{ap:secHybrid}
We first define the $t$-degree multiplication with $t<m$, denoted by $\sigma$ as
\begin{equation}\label{ap:eq:sigma}
    \begin{aligned}
       \sigma=\rho_1O_1\rho_2O_2\cdots \rho_{t-1}O_{t-1}\rho_t,
    \end{aligned}
\end{equation}
and develop a method to perform the RM on it with the help of the partial coherent power of Controlled-shift operation on just t-copy.

\begin{figure}[hbt]
\centering
\resizebox{6cm}{!}{\includegraphics[scale=0.6]{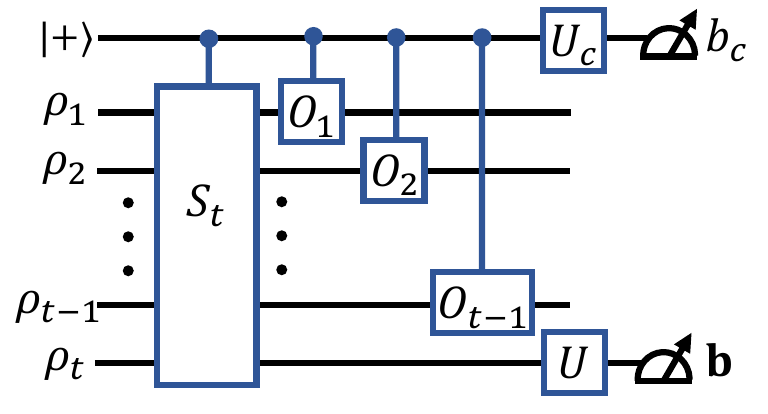}}
\caption{The quantum circuit to effectively conduct the RM on $\sigma$ in Eq.~\eqref{ap:eq:sigma}. Here $U$ is the random unitary applied on the $t$-th state $\rho_t$. $U_c$ on the control qubit is also random, determined by the binary random variable $c$, which corresponds to the $X(Y)$-basis measurement.}\label{Fig:FmHybrid} 
\end{figure}

The quantum circuit is similar to that for the swap test method shown in Fig.~\ref{Fig:FmSwap}, but we further conduct RM on $t$-th state $\rho_t$, as shown in Fig.~\ref{Fig:FmHybrid}. The measurement procedure follows Algorithm 1 . Note here besides the RM on $\rho_t$, we also conduct the Pauli-$X$ or $Y$ measurement on the control qubit, which in some sense is also random. For simplicity we use a random variable $\widehat{c}$ to denote this choice, and $c=0/1$ labels the Pauli-$X$/$Y$ measurement with probability $\frac{1}{2}$. 
The reason for this choice is to separate $\sigma$ and its conjugate $\sigma^{\dag}$, similar to the swap test method discussed in Sec.~\ref{ap:secSwap}. 

For a single-shot of the measurement, one can collect the measurement result $b_{c}$ and $\mb{b}$. Here $b_{c}$ denotes the result for the control qubit, and $c=0/1$ for the chosen basis. We give the unbiased estimator of $\sigma$ as follows.
\begin{prop}\label{}
Suppose one conducts the measurement procedure shown in Fig.~\ref{Fig:FmHybrid} for once, the unbiased estimator of $\sigma$ shows
\begin{equation}\label{ap:esttsigma}
    \begin{aligned}
       \widehat{\sigma}:=2(-1)^{\widehat{b}_{\widehat{c}}}\ i^{\widehat{c}}\cdot \mc{M}^{-1}\left(U^{\dag}\ket{\widehat{\mb{b}}}\bra{\widehat{\mb{b}}}U\right),
    \end{aligned}
\end{equation}
such that $\mathbb{E}_{\{U,c,\mb{b},b_c\}}\left(\widehat{\sigma}\right)=\sigma$. Here $b_c$ and $\mb{b}$ are the measurement results of the control qubit and the final copy $\rho_t$ respectively; $c$ and $U$ denote the measurement basis of the control qubit and the random unitary, respectively; the inverse operation $\mc{M}^{-1}$ depends on the random unitary ensemble one applied. 

To evaluate the value of some observable $O$ on $\sigma$, the variance of the estimator $\widehat{o}=\mathrm{tr}(O\widehat{\sigma})$ shows
\begin{equation}\label{eq:VarSig}
    \begin{aligned}
 \mathrm{Var}\left[\tr(O\widehat{\sigma})\right]
      &=4\left\{\mathrm{Var}\left[\tr(O_0\widehat{\rho'})\right]+\tr(O\rho')^2\right\}-|\tr(O\sigma)|^2\\
      &\leq 4\|O_0\|^2_{\mathrm{shadow}}+4\tr(O\rho')^2,
    \end{aligned}
\end{equation}
where $\mathrm{Var}\left[\tr(O_0\widehat{\rho'})\right]$ is the variance of measuring $O$ on the shadow snaptshot of the state $\rho'=(\rho_1+\rho_t)/2$, which can be upper bounded by the square of shadow norm $\|O_0\|_{\mathrm{shadow}}$ for the traceless part $O_0=O-\tr(O)\id_d/d$. \cite{huang2020predicting}.
\end{prop}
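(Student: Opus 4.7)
The plan is to mirror the proof of Theorem~\ref{Th:swapshadow} in Sec.~\ref{ap:Th1}, with two new ingredients to track: $\sigma$ need not be Hermitian, and the control qubit is measured in a basis chosen at random between Pauli-$X$ ($c=0$) and Pauli-$Y$ ($c=1$). For unbiasedness, I would first compute $\mathbb{E}[\widehat{\sigma}]$ by conditioning on $U$, $\mb{b}$, and $c$ and summing over $b_c$ via the identity $\sum_{b_c}(-1)^{b_c}\Pi^{c}_{b_c}=P_c$, where $\Pi^c_{b_c}=(\id+(-1)^{b_c}P_c)/2$ and $P_0=X$, $P_1=Y$. This collapses the control degree of freedom and reduces the computation to $\tr_c[P_c\cdot V(\ket{+}_c\bra{+}\otimes\rho_1\otimes\cdots\otimes\rho_t)V^{\dag}]$ with $V=\prod_{i=1}^{t-1}\mathrm{CO}_i\cdot\mathrm{CS}_t$. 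Writing $V=\ket{0}_c\bra{0}\otimes\id+\ket{1}_c\bra{1}\otimes W_1$ with $W_1:=(O_1\otimes\cdots\otimes O_{t-1}\otimes\id)S_t$, the result on the $t$ copies is $\tfrac{1}{2}(W_1\rho+\rho W_1^{\dag})$ for $c=0$ and $\tfrac{i}{2}(\rho W_1^{\dag}-W_1\rho)$ for $c=1$, where $\rho:=\rho_1\otimes\cdots\otimes\rho_t$.

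Next I would partial-trace out copies $1,\ldots,t-1$ using two cyclic identities that follow directly from the shift structure of $S_t$,
\[
\tr_{1,\ldots,t-1}[(A_1\otimes\cdots\otimes A_t)S_t]=A_tA_1\cdots A_{t-1},\quad \tr_{1,\ldots,t-1}[S_t^{\dag}(B_1\otimes\cdots\otimes B_t)]=B_{t-1}\cdots B_1B_t,
\]
verified most transparently by a tensor-diagram derivation in the style of Figs.~\ref{Fig:hybrid} and \ref{Fig:diagonal}. Substituting $A_i=O_i\rho_{i+1}$, $A_t=\rho_1$ and the analogous $B_i$'s yields the Hermitian part $(\sigma+\sigma^{\dag})/2$ in the $c=0$ branch and the anti-Hermitian part $i(\sigma^{\dag}-\sigma)/2$ in the $c=1$ branch. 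Multiplying by $(-i)^c$ and averaging uniformly over $c$ recombines these into $\sigma$, so that the remaining expectation over $U$ and $\mb{b}$ is exactly the defining form of the inverse shadow channel $\mc{M}^{-1}$ applied to $\sigma$, extended to non-Hermitian inputs by linearity, which returns $\sigma$.

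For the variance bound, the key observation is that $|(-1)^{b_c}(-i)^c|^2=1$, so the sign and phase drop out of $\mathbb{E}[|\tr(O\widehat{\sigma})|^2]$. The sum over $b_c$ is then governed by $\sum_{b_c}\Pi^c_{b_c}=\id_c$, i.e., a partial trace on the control that does not depend on $c$. After tracing out copies $1,\ldots,t-1$ and using unitarity of each $O_i$ (so that $\tr(O_i\rho_{i+1}O_i^{\dag})=1$), the $W_1\rho W_1^{\dag}$ branch collapses to $\rho_1$ and the identity branch to $\rho_t$; the marginal probability of $\mb{b}$ is therefore $\bra{\mb{b}}U\rho' U^{\dag}\ket{\mb{b}}$ with $\rho'=(\rho_1+\rho_t)/2$. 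It follows that $\mathbb{E}[|\tr(O\widehat{\sigma})|^2]=\mathbb{E}[\tr(O\widehat{\rho'})^2]$ for a single-shot shadow snapshot of $\rho'$; rewriting this as $\mathrm{Var}[\tr(O_0\widehat{\rho'})]+\tr(O\rho')^2$ and subtracting $|\tr(O\sigma)|^2$ gives Eq.~\eqref{eq:VarSig} after invoking the shadow-norm bound for $O_0$.

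The main obstacle is the sign bookkeeping for the non-Hermitian $\sigma$ together with the complex phase $(-i)^c$: one has to confirm that the $X$- and $Y$-branches assemble into $\sigma$ rather than $\sigma^{\dag}$ or some crossed combination, and that the normalisation from the uniform control-basis average is correctly absorbed. Tensor-diagram derivations analogous to Figs.~\ref{Fig:hybrid} and \ref{Fig:diagonal} are the cleanest way to verify both steps, since they make the $X$- and $Y$-branches visually parallel and separate the cyclic-shift structure from the $O_i$ insertions.
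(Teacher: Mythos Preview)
Your proposal follows essentially the same route as the paper's own proof: condition on $U$, $\mb{b}$, and $c$, collapse the control via $\sum_{b_c}(-1)^{b_c}\Pi^{c}_{b_c}=P_c$, treat the $X$- and $Y$-branches separately to isolate the Hermitian and anti-Hermitian parts of $\sigma$, and then combine; for the variance, both you and the paper observe that $|(-1)^{b_c}(-i)^c|^2=1$ reduces the second moment to a single-copy shadow computation on $\rho'=(\rho_1+\rho_t)/2$. Your explicit introduction of $W_1$ and the cyclic partial-trace identities is a slightly more detailed write-up of what the paper abbreviates by ``following Eq.~\eqref{ap:eq:mswap}/\eqref{ap:eq:mswapY}'', and your caution about the sign and normalisation bookkeeping is well placed---carrying it through carefully is exactly where constant factors and the $\sigma$ versus $\sigma^{\dag}$ distinction have to be checked.
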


The proof is similar to that of Theorem 1  in Sec.~\ref{ap:Th1} by further considering the expectation on the index $c$.
\begin{proof}
By definition, the expectation value of the estimator shows
\begin{equation}\label{ap:eq:sigmaEst}
    \begin{aligned}
       \mathbb{E}_{\{U,c,\mb{b},b_c\}}\ \widehat{\sigma}& =\mathbb{E}_{\{U,c,\mb{b},b_c\}}\ 2(-1)^{\widehat{b}_{\widehat{c}}}(i)^{\widehat{c}}\cdot \mc{M}^{-1}\left(U^{\dag}\ket{\widehat{\mb{b}}}\bra{\widehat{\mb{b}}}U\right)\\
       &=\sum_{U,c,\mb{b},b_c} \mathrm{Pr}(U,c,\mb{b},b_c)\ 2(-1)^{b_c}(i)^{c}\mc{M}^{-1}\left(U^{\dag}\ket{\mb{b}}\bra{\mb{b}}U\right)\\
       &=2\sum_{U} \mathrm{Pr}(U) \sum_{c}\mathrm{Pr}(c)\sum_{\mb{b},b_c}\mathrm{Pr}(\mb{b},b_c|U,c)\  (-1)^{b_c}(i)^{c}\mc{M}^{-1}\left(U^{\dag}\ket{\mb{b}}\bra{\mb{b}}U\right)\\
       &=2\sum_{U} \mathrm{Pr}(U) \sum_{c}\frac1{2} \sum_{\mb{b},b_c}\mathrm{Pr}(\mb{b},b_c|U,c)\  (-1)^{b_c}(i)^{c}\mc{M}^{-1}\left(U^{\dag}\ket{\mb{b}}\bra{\mb{b}}U\right).
    \end{aligned}
\end{equation}
Here the joint probability distribution $\mathrm{Pr}(U,c,\mb{b},b_c)=\mathrm{Pr}(U)\mathrm{Pr}(c)\mathrm{Pr}(\mb{b},b_c|U,c)$, since the the choices of $U$ and $c$ are independent, and $\mathrm{Pr}(c)=\frac{1}{2}$ for $c=0,1$.
Given $U$ and $\mb{b}$, and for $c=0$ (the $X$-basis), one can first sum the index $b_0$ as 
\begin{equation}\label{}
    \begin{aligned}
       \sum_{b_0}\mathrm{Pr}(\mb{b},b_0|U,0)\  (-1)^{b_0}
     =&\tr[ X_c \bra{\mb{b}}U\ \prod_{i=1}^{t-1} \mathrm{CO_i}\  \mathrm{CS}_t \left(\ket{+}_c\bra{+}\otimes \bigotimes_{i=1}^t \rho_i\right)  \mathrm{CS}_t^{\dag}\ \prod_{i=1}^{t-1} \mathrm{CO_i} \ U^{\dag}\ket{\mb{b}}]\\
     =&\frac1{2}\left(\bra{\mb{b}}U\sigma U^{\dag}\ket{\mb{b}}+\bra{\mb{b}}U\sigma^{\dag} U^{\dag}\ket{\mb{b}}\right)
    \end{aligned}
\end{equation}
with $\sigma$ defined in Eq.~\eqref{ap:eq:sigma}, and the final line follows similarly as in Eq.~\eqref{ap:eq:mswap}. Similarly, for the $c=1$ case, one has
\begin{equation}\label{}
    \begin{aligned}
       \sum_{b_1}\mathrm{Pr}(\mb{b},b_1|U,1)\  (-1)^{b_0}(-i)
     =&(-i)\tr[ Y_c \bra{\mb{b}}U\ \prod_{i=1}^{t-1} \mathrm{CO_i}\  \mathrm{CS}_t \left(\ket{+}_c\bra{+}\otimes \bigotimes_{i=1}^t \rho_i\right)  \mathrm{CS}_t^{\dag}\ \prod_{i=1}^{t-1} \mathrm{CO_i} \ U^{\dag}\ket{\mb{b}}]\\
     =&\frac1{2}\left(\bra{\mb{b}}U\sigma U^{\dag}\ket{\mb{b}}-\bra{\mb{b}}U\sigma^{\dag} U^{\dag}\ket{\mb{b}}\right)
    \end{aligned}
\end{equation}
following Eq.~\eqref{ap:eq:mswapY}. As a result, combing them one has
\begin{equation}\label{}
    \begin{aligned}
       \sum_{c}\frac1{2} \sum_{b_c}\mathrm{Pr}(\mb{b},b_c|U,c)\  (-1)^{b_c}(-i)^{c}=\frac1{2}\bra{\mb{b}}U\sigma U^{\dag}\ket{\mb{b}},
    \end{aligned}
\end{equation}
and by inserting it into Eq.~\eqref{ap:eq:sigmaEst} one further has
\begin{equation}\label{}
    \begin{aligned}
       \mathbb{E}_{\{U,c,\mb{b},b_c\}}\ \widehat{\sigma}=\sum_{U} \sum_{\mb{b}} \mathrm{Pr}(U) \bra{\mb{b}}U\sigma U^{\dag}\ket{\mb{b}}  \mc{M}^{-1}\left(U^{\dag}\ket{\mb{b}}\bra{\mb{b}}U\right)=\sigma.
    \end{aligned}
\end{equation}
The equality holds by the definition of the unbiased estimator of $\sigma$ in the original shadow protocol \cite{huang2020predicting}.

The variance of $\tr(O\widehat{\sigma})$ can be proved by following a similar way as in Sec.~\ref{ap:Th1}. 
\begin{equation}\label{ap:eq:VarSig}
    \begin{aligned}
       \mathbb{E}\ |\tr(O\widehat{\sigma})|^2&=\mathbb{E}\ \left|\tr\left[O\ 2(-1)^{\widehat{b}_c}(i)^{\widehat{c}} \mc{M}^{-1}\left(U^{\dag}\ket{\widehat{\mb{b}}}\bra{\widehat{\mb{b}}}U\right)\right]\right|^2\\
       &=\mathbb{E}\ 4\bra{\widehat{\mb{b}}}U\mc{M}^{-1}(O)U^{\dag}\ket{\widehat{\mb{b}}}^2\\
       &=4\sum_{U} \mathrm{Pr}(U) \sum_{c}\frac1{2}\sum_{\mb{b},b_c}\mathrm{Pr}(\mb{b},b_c|U,c)\  \bra{\mb{b}}U\mc{M}^{-1}(O)U^{\dag}\ket{\mb{b}}^2
    \end{aligned}
\end{equation}
Note that the value of $\tr(O\widehat{\sigma})$ could be a complex number, and thus we take the square of the absolute value here.
As in the expectation value case, here we first sum the index $b_c$. Since the phases from $b_c$ and $c$ are eliminated, similar as Eq.~\eqref{ap:eq:minus}, the summation about $b_c$ is equivalent to taking a partial trace on the control qubit. For $c=0/1$, it holds that
\begin{equation}\label{}
    \begin{aligned}
       \sum_{b_c}\mathrm{Pr}(\mb{b},b_c|U,c)
     =&\tr[ \id_c \bra{\mb{b}}U\ \prod_{i=1}^{t-1} \mathrm{CO_i}\  \mathrm{CS}_t \left(\ket{+}_c\bra{+}\otimes \bigotimes_{i=1}^t \rho_i\right)  \mathrm{CS}_t^{\dag}\ \prod_{i=1}^{t-1} \mathrm{CO_i} \ U^{\dag}\ket{\mb{b}}]\\
     =&\frac1{2}\bra{\mb{b}}U\rho_t U^{\dag}\ket{\mb{b}}+\frac1{2}\tr\left[\bra{\mb{b}}U \left(\bigotimes_{i=1}^{t-1}O_i\right) S_t \left(\bigotimes_{i=1}^t\rho_i\right) S_t^{\dag}\left(\bigotimes_{i=1}^{t-1} O_i\right) U^{\dag}\ket{\mb{b}}\right]\\
     =&\frac1{2}\left[\bra{\mb{b}}U\rho_t U^{\dag}\ket{\mb{b}}+\bra{\mb{b}}U\rho_1 U^{\dag}\ket{\mb{b}}\right]
    \end{aligned}
\end{equation}
where we use that fact that $O_i=O_i^{\dag}$ and $O_i^2=\id_i$. By inserting this result to Eq.~\eqref{ap:eq:VarSig}, one has
\begin{equation}\label{}
    \begin{aligned}
       \mathbb{E}\ |\tr(O\widehat{\sigma})|^2
       &=4\sum_{U} \mathrm{Pr}(U) \sum_{c}\frac1{2}\sum_{\mb{b},b_c}\mathrm{Pr}(\mb{b},b_c|U,c)\  \bra{\mb{b}}U\mc{M}^{-1}(O)U^{\dag}\ket{\mb{b}}^2\\
       &=4\sum_{U} \mathrm{Pr}(U) \sum_{\mb{b}}\bra{\mb{b}}U \frac{\rho_1+\rho_t}{2} U^{\dag}\ket{\mb{b}}\  \bra{\mb{b}}U\mc{M}^{-1}(O)U^{\dag}\ket{\mb{b}}^2\\
       &=4\ \mathbb{E}\ \left|\tr(O\widehat{\frac{\rho_1+\rho_t}{2}})\right|^2
    \end{aligned}
\end{equation}
Here in the last line the expectation value is for the shadow snapshot of the state $\rho':=(\rho_1+\rho_t)/2$ for any $t$. As a result,
\begin{equation}\label{}
    \begin{aligned}
 \mathrm{Var}\left[\tr(O\widehat{\sigma})\right]
      &=4\left\{\mathrm{Var}\left[\tr(O\widehat{\rho'})\right]+\tr(O\rho')^2\right\}-|\tr(O\sigma)|^2\\
      &=4\left\{\mathbb{E}\ \tr(O_0\widehat{\rho'})^2-\tr(O_0\rho')^2+\tr(O\rho')^2\right\}-|\tr(O\sigma)|^2\\
      &\leq 4\|O_0\|^2_{\mathrm{shadow}}+4\tr(O\rho')^2.
    \end{aligned}
\end{equation}
\end{proof}

With the estimator in Eq.~\eqref{ap:esttsigma} for the $t$-degree multiplication in Eq.~\eqref{ap:eq:sigma}, one can patch a few of them to estimate the $m$-th degree function $F_m$ in Eq.~\eqref{ap:Fm} . First, one can partition $F_m$ into $L$ pieces as follows.
\begin{equation}\label{ap:FmP}
    \begin{aligned}
       F_m(\{\rho_i\},\{O_i\})&=\tr(\underbrace{\rho_1O_1 \rho_2O_2\cdots \rho_{m_1}}_{\sigma_1}\ O_{m_1}\ \underbrace{\rho_{m_1+1}O_{m_1+1}
       \cdots \rho_{m_2}}_{\sigma_2}\ O_{m_1+m_2} \ \cdots    \cdots \underbrace{\rho_{m-m_L+1}O_{m-m_L+1}\cdots \rho_m}_{\sigma_L}\ O_m ),\\
       &=\tr(\sigma_1\ O_{m_1}\ 
      \sigma_2\ O_{m_1+m_2} \ \cdots    \cdots \sigma_L\ O_m )\\
      &=\tr(O_{m_1}\otimes O_{m_1+m_2}\cdots \otimes O_m\  S_L\ \sigma_1\otimes\sigma_2 \cdots \otimes \sigma_L)\\
    \end{aligned}
\end{equation} 
with each of $\sigma_i$ a $m_i$-degree multiplication. 
Then the estimator of $F_m$ reads
\begin{equation}\label{ap:FmPEst}
    \begin{aligned}
       \widehat{F_m}
&=\tr(O_{m_1}\otimes O_{m_1+m_2}\cdots \otimes O_m\  S_L\ \widehat{\sigma_1}\otimes\widehat{\sigma_2} \cdots \otimes \widehat{\sigma_L}),  
    \end{aligned}
\end{equation} 
where each of $\widehat{\sigma_i}$ is the estimator of $\sigma_i$ by Eq.~\eqref{ap:esttsigma}. Indeed, one can increase the number of independent snapshots for each $\sigma_i$ to reduce the variance. 

We remark that by applying the hybrid framework, one at most needs to parallel prepare $\max_i m_i$ copies of the states, and then conduct the classical postprocessing on $L$-copy of Hilbert space $\mc{H}_d$, which significantly reduce the experimental realization and also shadow estimation budget. Moreover, the variance can be analysed using Eq.~\eqref{eq:VarSig}. We give more detailed discussions on the advantages of the hybrid framework developed in Sec.~\ref{ap:hybrid} and \ref{ap:frame}, by showing the applications for measuring the moments and quantum error mitigation in the following two sections.

\section{Application for measuring moment $P_m$}\label{sec:ap:Pm}
In this section, we apply the hybrid framework developed in Sec.~\ref{ap:hybrid} and \ref{ap:frame} for measuring the higher-order moments of the quantum state. In particular, we propose new estimators of these moments, and take the third and forth moments $P_3$ and $P_4$ for the illustration.

\subsection{Unbiased estimators of $P_3$ and $P_4$}
We first give the proof of the unbiasedness of the estimator for $P_3$ in Eq.~(11) in Proposition 1.
\begin{proof}
The expectation of the estimator reads
\begin{equation}\label{}
    \begin{aligned}
      \mbb{E} \widehat{P_3}&= \frac1{MM'}\sum_{i\in[M],i'\in[M']}\tr(S_2\ \mbb{E}\widehat{\rho^{2}}_{(i)}\otimes \mbb{E}\widehat{\rho}_{(i')})\\
      &=\frac1{MM'}\sum_{i\in[M],i'\in[M']}\tr(S_2\ \rho^{2} \otimes \rho)=\tr(\rho^3),
    \end{aligned}
\end{equation}
where we use the fact that $\{\widehat{\rho^{2}}_{(i)}\}$ and $\{\widehat{\rho}_{(i')}\}$ are shadow sets for $\rho^2$ and $\rho$ respectively, and the snapshots in these two sets are independent.
\end{proof}
 Similarly, one can construct an unbiased estimator for $P_4$ as follows.
\begin{equation}\label{4mEst}
    \begin{aligned}
       \widehat{P_4}=\frac{2}{M(M-1)}\sum_{1\leq i<j \leq M}\tr(S_2\widehat{\rho^{2}}_{(i)}\otimes \widehat{\rho^2}_{(j)}).
    \end{aligned}
\end{equation}
where $\{\widehat{\rho^2}_{(i)}\}$ is the shadow set for $\rho^2$, and we take $i\neq j$ for independence. It is clear that one can in principle construct the estimator of general $P_m$ in the same way, by patching the shadow snapshots $\widehat{\rho^{t}}$ with $t<m$. One thing should be kept in mind is that one should take distinct snapshots for terms $\widehat{\rho^{t}}$ with the same $t$ in the summation for independence, as for $\widehat{P_4}$ in Eq.~\eqref{4mEst}.

At the end of this subsection, we give some additional remark on $\widehat{P_3}$. Different from $\widehat{P_4}$, there are two shadow sets $\{\widehat{\rho^{2}}_{(i)}\}$ and $\{\widehat{\rho}_{(i')}\}$ involved, which means one should make shadow estimation for $\rho^2$ and $\rho$ respectively. Actually, the measurement data collected in Algorithm 1  for any $t$ can also be used to construct shadow for $\rho$, by ignoring the information of the control qubit. For the single-shot case, one has $\widehat{\rho}= \mc{M}^{-1}\left(U^{\dag}\ket{\widehat{\mb{b}}}\bra{\widehat{\mb{b}}}U\right)$. This can be proved following the proof of Theorem 1  in Sec.~\ref{ap:Th1}. Note that since now there is no sign information from the measurement result $b_c$ of the control qubit, the summation of the index $b_c$ in Eq.~\eqref{eq:ap:bc} should be substituted with Eq.~\eqref{ap:eq:minus}. Consequently, one can construct another estimator of $P_3$ only using the measurement result in Algorithm 1  ($K=1$) for $t=2$ as follows.
\begin{equation}\label{3mEst1}
    \begin{aligned}
   \widehat{P_3}&= \frac{1}{M(M-1)}\sum_{i\neq i'}\tr\left[S_2\  (-1)^{\widehat{b}_c^{(i)}} \mc{M}^{-1}\left(U^{\dag}\ket{\widehat{\mb{b}}^{(i,1)}}\bra{\widehat{\mb{b}}^{(i,1)}}U\right)\otimes \mc{M}^{-1}\left(U^{\dag}\ket{\widehat{\mb{b}}^{(i',1)}}\bra{\widehat{\mb{b}}^{(i',1)}}U\right)\right].
    \end{aligned}
\end{equation}
The advantage of this estimator compared to that in Eq.~(11) is that one needs not to make additional shadow estimation for $\rho$.

\subsection{The variance of the estimators}\label{ap:var}
In this subsection, we give the statistical analysis of the estimators of moments, by applying Eq.~(7) in Theorem 1 , and also the properties of shadow norm. Fist, let us list the previous results on the upper bound of the shadow norm and variance, which are helpful for the later discussion.

\begin{fact}\label{fact:shadownorm}
In the shadow estimation \cite{huang2020predicting}, for a single-shot estimator $\widehat{\rho}$ of the quantum state $\rho\in \mc{H}_d$, and an observable $O$ with its traceless part $O_0$, the shadow norms of the Clifford and Pauli measurement primitives are respectively upper bounded by
\begin{equation}
    \begin{aligned}
      &\|O_0\|^2_{\mathrm{shadow}(c)}\leq 3\tr(O^2)\ (Proposition \ S1); \\
      &\|O_0\|^2_{\mathrm{shadow}(p)}\leq 2^{\mathrm{supp}(O)}\|\Tilde{O}\|_{2}^2\ (Proposition \ S3).
    \end{aligned}
\end{equation}
Here in the second bound for Pauli measurements we assume $O=\Tilde{O}\otimes \id_{2^{n-k}} $ with $\Tilde{O}$ restricted on the actively supporting $k$ qubits, and thus $\mathrm{supp}(O)=k$.

For the nonlinear function estimation, suppose $\widehat{\rho}_1$ and $\widehat{\rho}_2$ are two distinct estimators and $O$ is some operator on $\mc{H}_d^{\otimes 2}$, the variance with Clifford measurements can be upper bounded by
\begin{equation}
    \begin{aligned}
      \mathrm{Var}[\tr(O\widehat{\rho}_1\otimes \widehat{\rho}_2)]\leq 9\tr(O^2)+\frac{6}{d} \|O\|^2_{\infty}\ (Lemma~S6).
      \end{aligned}
\end{equation}
For Pauli measurements with $O=S_2$, the variance has the upper bound (Lemma~1 \cite{garcia2021quantum})
\begin{equation}
    \begin{aligned}
      \mathrm{Var}[\tr(S_2\widehat{\rho}_1\otimes \widehat{\rho}_2)]\leq d^3.
            \end{aligned}
\end{equation}
\end{fact}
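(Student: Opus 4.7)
The plan is to establish each of the four inequalities in Fact~\ref{fact:shadownorm} by reducing everything to the two structural properties of the shadow channel $\mathcal{M}$: (i) for global Clifford, $\mathcal{M}^{-1}(X)=(d+1)X-\tr(X)\mathbb{I}_d$, while for Pauli the channel factorizes as $\mathcal{M}^{-1}=\mathcal{M}_1^{-1\otimes n}$ with $\mathcal{M}_1^{-1}(X)=3X-\tr(X)\mathbb{I}_2$; (ii) the Clifford group is an exact unitary $3$-design, so averages of degree-3 polynomials in $U$ reduce to Haar averages and can be evaluated through Weingarten/permutation expansions. Throughout I would use the definition $\|O_0\|_{\mathrm{shadow}}^2=\max_{\rho}\mathbb{E}_{U,\mathbf{b}}\bigl[\bra{\mathbf{b}}U\mathcal{M}^{-1}(O_0)U^{\dag}\ket{\mathbf{b}}^2\cdot \bra{\mathbf{b}}U\rho U^{\dag}\ket{\mathbf{b}}\bigr]$ and pull the $\mathbf{b}$-sum inside, so the quantity to control is the expectation of $\tr\bigl[(\rho\otimes A\otimes A)\,U^{\dag\otimes 3}\bigl(\sum_{\mathbf{b}}\ket{\mathbf{b}}\bra{\mathbf{b}}^{\otimes 3}\bigr)U^{\otimes 3}\bigr]$ with $A=\mathcal{M}^{-1}(O_0)$.

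First I would handle the Clifford bound. Using the $3$-design identity, $\mathbb{E}_U U^{\dag\otimes 3}\bigl(\sum_{\mathbf{b}}\ket{\mathbf{b}}\bra{\mathbf{b}}^{\otimes 3}\bigr)U^{\otimes 3}=\sum_{\pi\in S_3}c_\pi\,W_\pi$ with Weingarten coefficients $c_\pi$ that depend only on $d$, and $W_\pi$ the permutation operator. Contracting against $\rho\otimes A\otimes A$ yields a short sum of traces of the form $\tr(\rho)\tr(A)^2$, $\tr(\rho)\tr(A^2)$, $\tr(\rho A)\tr(A)$, $\tr(\rho A^2)$, $\tr(\rho A)\tr(A^2)$, etc. Because $O_0$ is traceless and $\mathcal{M}^{-1}$ preserves tracelessness, $\tr(A)=0$ and the surviving pieces are $\tr(A^2)$ and $\tr(\rho A^2)$, each weighted by $O(1/d^2)$. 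Plugging in $A=(d+1)O_0-\tr(O_0)\mathbb{I}=(d+1)O_0$ and maximizing over $\rho$ through $\tr(\rho A^2)\leq \|A^2\|_\infty\leq (d+1)^2\|O_0\|_\infty^2$ then gives, after the algebra simplifies, the stated $3\tr(O_0^2)\leq 3\tr(O^2)$.

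Next I would treat the Pauli bound by exploiting the tensor-product structure. Expanding $O=\sum_{\mathbf{P}}\alpha_{\mathbf{P}}\mathbf{P}$ in the Pauli basis and using $\mathcal{M}_1^{-1}(P)=3P$ on non-identity Paulis and $\mathcal{M}_1^{-1}(\mathbb{I})=\mathbb{I}$, the shadow norm decouples into a product over qubits. On a single qubit with a local Pauli measurement, one checks by direct computation that the contribution of an identity factor is $1$ and of a non-identity Pauli factor is $2$; the factor of $2^{\mathrm{supp}(O)}$ then emerges from counting the qubits on which the operator is non-trivial, and $\|\tilde O\|_2^2$ is exactly what the per-qubit Pauli-coefficient squares sum to.

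The two variance bounds follow from the first two via independence. Writing the estimator as $\tr(O\widehat{\rho}_1\otimes\widehat{\rho}_2)=\tr(\mathcal{O}\widehat{\rho})$ where $\mathcal{O}=\tr_2[O(\mathbb{I}\otimes\widehat{\rho}_2)]$ is still an operator on one copy, I would use the tower-of-expectations to conclude $\mathrm{Var}\leq \mathbb{E}_{\widehat{\rho}_2}\|\mathcal{O}\|_{\mathrm{shadow}}^2+\mathrm{Var}_{\widehat{\rho}_2}[\tr(O(\rho\otimes\widehat{\rho}_2))]$; applying the single-copy Clifford bound to both contributions and using $\|\mathcal{O}\|_\infty\leq \|O\|_\infty$ and $\tr(\mathcal{O}^2)\leq \tr(O^2)$ yields the $9\tr(O^2)+6\|O\|_\infty^2/d$ estimate after collecting constants. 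For Pauli with $O=S_2=d^{-1}\sum_{\mathbf{P}}\mathbf{P}\otimes\mathbf{P}$, the tensor structure collapses the double-shadow variance to a sum over Pauli strings of per-qubit variances of order $O(1)$ each, and the $4^n$ Pauli terms weighted by $d^{-2}$ produce the $d^3$ scaling. The main obstacle throughout will be the book-keeping of the Weingarten/$3$-design expansion in the Clifford step: the seven orbit-sum terms must each be bounded with the correct sign, and one must verify that the negative Weingarten contributions do not cancel the positive leading behavior; everything else (Pauli factorization, tower-of-expectation for the product estimator) is essentially routine once that computation is in hand.
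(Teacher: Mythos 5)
The paper does not actually prove this Fact: all four bounds are imported verbatim from the supplementary material of Ref.~\cite{huang2020predicting} (Propositions S1, S3 and Lemma S6) and from Lemma 1 of Ref.~\cite{garcia2021quantum}, so there is no in-paper argument to compare against. Your sketch of the two single-copy shadow-norm bounds follows the same route as those references (exact $3$-design reduction for the global Clifford ensemble, per-qubit factorization of $\mathcal{M}^{-1}$ for Pauli measurements) and is sound in outline, up to one imprecision: to land on $3\tr(O_0^2)\le 3\tr(O^2)$ you should close the Clifford computation with $\tr(\rho A^2)\le\tr(A^2)$ (valid because $\rho\preceq\mathbb{I}$ and $A^2\succeq 0$), not with $\|A^2\|_\infty\le(d+1)^2\|O_0\|_\infty^2$, which yields an operator-norm bound rather than the stated Hilbert--Schmidt one.

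The genuine gap is in your derivation of the two-copy variance bound. You condition on $\widehat{\rho}_2$, set $\mathcal{O}=\tr_2[O(\mathbb{I}\otimes\widehat{\rho}_2)]$, and then invoke $\|\mathcal{O}\|_\infty\le\|O\|_\infty$ and $\tr(\mathcal{O}^2)\le\tr(O^2)$. Both inequalities are false in general: $\widehat{\rho}_2$ is a shadow snapshot, not a density operator --- for the Clifford primitive $\widehat{\rho}_2=(d+1)U^\dagger\ketbra{\mb{b}}{\mb{b}}U-\mathbb{I}_d$ has operator norm of order $d$, so $\mathcal{O}$ can exceed $O$ by factors polynomial in $d$ and the deterministic bounds destroy the claimed constants. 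The conditioning strategy is repairable, but only by bounding $\mathbb{E}_{\widehat{\rho}_2}[\tr(\mathcal{O}^2)]$ through a second application of the shadow-norm machinery (this is essentially where the $9=3\times 3$ and the $6\|O\|_\infty^2/d$ correction come from in Lemma S6 of Ref.~\cite{huang2020predicting}); as written, the step does not go through. The Pauli $S_2$ computation is right in spirit, but you still need to use the independence of $\widehat{\rho}_1$ and $\widehat{\rho}_2$ together with the support-matching selection rule of the local twirl to see that the count closes at $d^3$ rather than a larger power.
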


We first give the proof of the variance for $\widehat{P_3}$ in Eq.~12 in Proposition 1.
\begin{proof}
Recall that in Eq.~(11) the unbiased estimator of $P_3$ shows
\begin{equation*}
    \begin{aligned}
       \widehat{P_3}=\frac1{M_1M_2}\sum_{i\in[M_1],j\in[M_2]}\tr(S_2\widehat{\rho^{2}}_{i}\otimes \widehat{\rho}_{j}),
    \end{aligned}
\end{equation*}
where $\widehat{\rho^{2}}_i$ and $\widehat{\rho}_{j}$ come from $M_1$ and $M_2$ independent snapshots. By definition, the variance shows
\begin{equation}\label{ap:eq:VarP3}
    \begin{aligned}
       \mathrm{Var}\left(\widehat{P_3}\right)=\frac1{(M_1M_2)^2} \sum_{i\in[M_1],j\in[M_2]} \sum_{i'\in[M_1],j'\in[M_2]} \mathbb{E}\left[\tr(S_2\widehat{\rho^{2}}_i\otimes \widehat{\rho}_{j}) \tr(S_2\widehat{\rho^{2}}_{i'}\otimes \widehat{\rho}_{j'})\right]-P_3^2 \end{aligned}
\end{equation}
If there is no index-coincidence, say $i\neq i'$ and $j\neq j'$, the expectation for this kind of term returns $P_3^2$. Thus one only needs to consider the following cases with the coincidence. 

One coincidence with $i=i'$ and $j\neq j'$:
\begin{equation}
    \begin{aligned}
       &\mathbb{E}\left[\tr(S_2\widehat{\rho^{2}}_i\otimes \widehat{\rho}_{j}) \tr(S_2\widehat{\rho^{2}}_{i}\otimes \widehat{\rho}_{j'})\right]-P_3^2\\
       =& \mathbb{E}\left[\tr(S_2\widehat{\rho^{2}}_i\otimes \rho)\tr(S_2\widehat{\rho^{2}}_i\otimes \rho)\right]-P_3^2\\
       =& \mathbb{E}\left[\tr(\rho\widehat{\rho^{2}}_i)^2\right]-P_3^2\\
       =&\mathrm{Var}\left[\tr(\rho \widehat{\rho^{2}})\right]\leq \|\rho-\id/d\|_\mathrm{shadow}^2+\tr(\rho^2)^2.
    \end{aligned}
\end{equation}
Here in the final line we omit the subscript $i$, and the result is just the variance of $O=\rho$ on the estimator $\widehat{\rho^{2}}$ for any $i$. The final inequality is due to Eq.~(7) in Theorem 1 .

One coincidence $i\neq i'$ and $j=j'$:
\begin{equation}
    \begin{aligned}
       &\mathbb{E}\left[\tr(S_2\widehat{\rho^{2}}_{i}\otimes \widehat{\rho}_{j}) \tr(S_2\widehat{\rho^{2}}_{i'}\otimes \widehat{\rho}_{j})\right]-P_3^2\\
       =& \mathbb{E}\left[\tr(S_2\rho^{2}\otimes \widehat{\rho}_{j})\tr(S_2\rho^{2}\otimes \widehat{\rho}_{j})\right]-P_3^2\\
       =&\mathrm{Var}\left[\tr(\rho^2 \widehat{\rho})\right]\leq \|\rho^2-\tr(\rho^2)\id/d\|_\mathrm{shadow}^2.
    \end{aligned}
\end{equation}
The result is just the variance of $O=\rho^2$ on the estimator $\widehat{\rho}$ for any $j$, and can be bounded by the shadow norm.

Two coincidences $i=i'$ and $j=j'$:
\begin{equation}
    \begin{aligned}
       &\mathbb{E}\left[\tr(S_2\widehat{\rho^{2}}_i\otimes \widehat{\rho}_{j}) \tr(S_2\widehat{\rho^{2}}_{i}\otimes \widehat{\rho}_{j})\right]-P_3^2\\
       =&\mathrm{Var}\left[\tr(S_2\widehat{\rho^{2}}\otimes \widehat{\rho})\right]\\
       =&\tr(\rho^{2})^2 \mathrm{Var}\left[\tr(S_2 \widehat{\sigma} \otimes \widehat{\rho})\right].
    \end{aligned}
\end{equation}
The result is the variance of $O=S_2$ on the composite estimator $\widehat{\rho^{2}}\otimes \widehat{\rho}$, and in the final line $\widehat{\sigma}=\widehat{\rho^{2}}/\tr(\rho^{2})$ is normalized.

By inserting all these cases into Eq.~\eqref{ap:eq:VarP3}, one gets
\begin{equation}
    \begin{aligned}
       \mathrm{Var}(\widehat{P_3})&=\frac1{(M_1M_2)^2}  \left\{ M_1\binom{M_2}{2} \mathrm{Var}\left[\tr(\rho \widehat{\rho^{2}})\right] + M_2\binom{M_1}{2} \mathrm{Var}\left[\tr(\rho^2 \widehat{\rho})\right] + M_1M_2 \mathrm{Var}\left[\tr(S_2\widehat{\rho^{2}}\otimes \widehat{\rho})\right]\right\},\\
      &\leq \frac{\|\rho-\id/d\|_\mathrm{shadow}^2+\tr(\rho^2)^2}{2M_1} + \frac{\|\rho^2-\id \tr(\rho^2)/d\|_\mathrm{shadow}^2}{2M_2} + \frac{\tr(\rho^{2})^2 \mathrm{Var}\left[\tr(S_2 \widehat{\sigma} \otimes \widehat{\rho})\right]}{M_1M_2},\\
       \end{aligned}
\end{equation}

By applying the results of shadow norm and variance for Pauli measurements in Proposition \ref{fact:shadownorm} and taking $M_1=M_2=M$ for simplicity,  one further obtains
\begin{equation}\label{ap:eq:P3Var}
    \begin{aligned}
       \mathrm{Var}_p(\widehat{P_3})&\leq \frac{ d\tr(\rho^2)+\tr(\rho^2)^2+d\tr(\rho^4)}{2M}+\frac{\tr(\rho^2)^2d^3}{M^2}\\
       &\leq \tr(\rho^2)[\frac{d+d\tr(\rho^2)+\tr(\rho^2)}{2M}+\frac{\tr(\rho^2)d^3}{M^2}]\\
       &\leq \tr(\rho^2)[\frac{d+1}{M}+\frac{\tr(\rho^2)d^3}{M^2}]\\
       \end{aligned}
\end{equation}
by the fact $\tr(\rho^4)\leq \tr(\rho^2)^2\leq 1$.

By directly applying Chebyshev’s inequality, one gets that
\begin{equation}\label{3mEstCheb}
    \begin{aligned}
       M\geq 2\max\left\{\frac{P_2(d+1)}{\epsilon^2\delta}, \frac{P_2d^{\frac{3}{2}}}{\epsilon\sqrt{\delta}}\right\}
    \end{aligned}
\end{equation}
is sufficient to let the estimation error less than $\epsilon$ within some confidence level $\delta$, i.e.,
$\mathrm{Prob}(|P_3-\widehat{P_3}|\leq\epsilon)\geq1-\delta$.
\end{proof}

We remark that the sampling complexity in Eq.~\eqref{3mEstCheb} of $P_3$ by the hybrid framework here is almost the same as that for $P_2$ by the original shadow estimation (Lemma 2 in Appendix of Ref.~\cite{elben2020mixedstate}), except that there is also a $P_2$ factor for the second term. This indicates that the hybrid framework reduces the variance and thus the statistical error from a 3-degree problem to a 2-degree one.

Similarly, by applying the results of shadow norm for the Clifford measurement in Proposition \ref{fact:shadownorm},
\begin{equation}\label{ap:eq:P3VarC}
    \begin{aligned}
       \mathrm{Var}_c(\widehat{P_3})&\leq \frac{ 3\tr(\rho^2)+\tr(\rho^2)^2+3\tr(\rho^4)}{2M}+\tr(\rho^2)^2\frac{9d^2+6/d}{M^2}\\
       &\leq \tr(\rho^2)[\frac{3+ 4\tr(\rho^2)}{2M}+\tr(\rho^2)\frac{9d^2+1}{M^2}],
       \end{aligned}
\end{equation}
and one can transform it to a lower bound of $M$ similarly as Eq.~\eqref{3mEstCheb} by using Chebyshev’s inequality.

We remark that the upper bounds in Eq.~\eqref{ap:eq:P3Var} and Eq.~\eqref{ap:eq:P3VarC} may be not tight, but they already imply the advantage over original shadow estimation. For instance, if one directly applies original shadow estimation for $P_3$ with estimator like $\widehat{P_3}^{(\mathrm{OS})}\propto\tr(S_3\widehat{\rho}_{(i)}\otimes \widehat{\rho}_{(j)}\otimes\widehat{\rho}_{(k)})$ with Pauli measurements (Eq.~(D15) in Ref.~\cite{elben2020mixedstate}), the leading order of the variance looks like (related to the 3 coincidence of the indices) 
\begin{equation}
    \begin{aligned}
       \mathrm{Var}_{p}\left[\widehat{P_3}^{(\mathrm{OS})}\right]&\sim \frac{d^3\tr(S_3S_3^{\dag})}{M^3}=\frac{d^6}{M^3},
       \end{aligned}
\end{equation}
as shown in Eq.~(D28) in Ref.~\cite{elben2020mixedstate}, which is worse than Eq.~\eqref{ap:eq:P3Var}. The advantage on the variance of our hybrid framework is also demonstrated by numerical results in Fig.~3 in main text.

The variance of the fourth moment $\widehat{P_4}$ can be bounded analytically in the same way as $\mathrm{Var}(\widehat{P_3})$ and shows a similar scaling, and we do not elaborate it here. We also numerically study the scaling of the statistical error with the qubit number $n$ using the Pauli measurement in Fig.~\ref{fig:varianceobsApp} (a), which is summarized in the second column of Table \ref{P4table}.

\subsection{Alternative estimators for the moments}\label{sec:Alternative}
In this subsection, we show alternative estimators of $P_3$ and $P_4$, by adopting the postprocessing in Ref.~\cite{Elben2019toolbox} using the same RM data collected in Algorithm 1  for $t=2$. We first prove the result of $P_3$ in Proposition 2.

\begin{proof}
In each shot of the measurement, as shown in Fig.~1 (c), the state is initialized to be $\rho_{c,I,II}=\ketbra{+}\otimes \rho^{\otimes 2}$ denoting the joint state. After the Controlled-$S_2$ operation and the random unitary evolution, the state before the projective measurement shows
\begin{equation}\label{eq:fullmatrix}
\begin{aligned}
  \rho_{c,I,II}=&\frac{1}{2}\ketbra{0}{0}\otimes(\mathbb{I}\otimes U)\rho^{\otimes 2}(\mathbb{I}\otimes U^\dagger)+\frac{1}{2}\ketbra{1}{1}\otimes(\mathbb{I}\otimes U)S_2\rho^{\otimes 2}S_2(\mathbb{I}\otimes U^\dagger)\\ 
  &+\frac{1}{2}\ketbra{0}{1}\otimes(\mathbb{I}\otimes U)\rho^{\otimes 2}S_2(\mathbb{I}\otimes U^\dagger)+\frac{1}{2}\ketbra{1}{0}\otimes(\mathbb{I}\otimes U)S_2\rho^{\otimes 2}(\mathbb{I}\otimes U^\dagger)
\end{aligned}
\end{equation}
As there is no measurement on the first copy of $\rho$, by taking a partial trace one gets the remaining density matrix on the control qubit and the second-copy as
\begin{equation}\label{eq:patrace}
\begin{aligned}
  \rho_{c,II}=\frac{1}{2}(\ketbra{0}{0}+\ketbra{1}{1})\otimes U\rho U^\dagger+\frac{1}{2}(\ketbra{0}{1}+\ketbra{1}{0})\otimes U\rho^2 U^\dagger.
\end{aligned}
\end{equation}
After measuring the second state in computational basis and getting the result $\mb{b}$, the (unnormalized) state of the control qubit becomes
\begin{equation}
\begin{aligned}
  \rho_c=\frac{1}{2}\bra{\mb{b}}U\rho U^\dagger\ket{\mb{b}}(\ketbra{0}{0}+\ketbra{1}{1})+\frac{1}{2}\bra{\mb{b}}U\rho^2 U^\dagger\ket{\mb{b}}(\ketbra{0}{1}+\ketbra{1}{0}).
\end{aligned}
\end{equation}
As a result, by measuring the control qubit in the $X$-basis, the probabilities for $0/1$ outcomes are
\begin{equation}\label{eq:probs}
\begin{aligned}
  &\mathrm{Pr}(0,\mb{b}|U)=\bra{+}\rho_c\ket{+}=\frac{1}{2}\bra{\mb{b}}U\rho U^\dagger\ket{\mb{b}}+\frac{1}{2}\bra{\mb{b}}U\rho^2 U^\dagger\ket{\mb{b}},\\
  &\mathrm{Pr}(1,\mb{b}|U)=\bra{-}\rho_c\ket{-}=\frac{1}{2}\bra{\mb{b}}U\rho U^\dagger\ket{\mb{b}}-\frac{1}{2}\bra{\mb{b}}U\rho^2 U^\dagger\ket{\mb{b}}.
\end{aligned}
\end{equation}
For any term in the summation of the estimator $\widehat{P_3}'$ in Proposition 2, say the $i$-the measurement setting, and $j,j'$-th shots, the expectation value shows (here we omit the labels of $i,j,j'$ for simplicity)
\begin{equation}\label{ap:eq:P3Zest}
\begin{aligned}
  \mathbb{E}_{U,b_c,b_c',\mb{b},\mb{b}'}\ (-1)^{b_c}X(\mb{b},\mb{b}')=&\mathbb{E}_{U} \sum_{b_c,b_c',\mb{b},\mb{b}'}\ \mathrm{Pr}(b_c,\mb{b}|U)\mathrm{Pr}(b_c',\mb{b}'|U)\ (-1)^{b_c}X(\mb{b},\mb{b}')\\
  =&\mathbb{E}_U\sum_{\mb{b},\mb{b}'} \ \left[\mathrm{Pr}(0,\mb{b}|U)-\mathrm{Pr}(1,\mb{b}|U)\right]\left[\mathrm{Pr}(0,\mb{b}'|U)+\mathrm{Pr}(1,\mb{b}'|U)\right]\ X(\mb{b},\mb{b}')\\
  =&\mathbb{E}_U\sum_{\mb{b},\mb{b}'}X(\mb{b},\mb{b}')\bra{\mb{b}}U\rho^2  U^\dagger\ket{\mb{b}}\bra{\mb{b}'}U\rho U^\dagger\ket{\mb{b}'}\\
  =&\mathbb{E}_U\tr\left[\sum_{\mb{b},\mb{b}'}X(\mb{b},\mb{b}')\ketbra{\mb{b},\mb{b}'}{\mb{b},\mb{b}'}U^{\otimes 2}\ (\rho^2\otimes\rho)\ U^{\dagger\otimes 2}\right]\\
  =&\tr\left[ \mathbb{E}_U U^{\dagger\otimes 2}XU^{\otimes 2}\ (\rho^2\otimes\rho)\right]\\
  =&\tr\left[S_2(\rho^2\otimes\rho)\right]=\tr(\rho^3).
\end{aligned}
\end{equation}
where the third line is according to Eq.~\eqref{eq:probs}.
Here we define the diagonal operator $X:=\sum_{\mb{b},\mb{b}'}X(\mb{b},\mb{b}')\ketbra{\mb{b},\mb{b}'}{\mb{b},\mb{b}'}$, and the 2-fold twirling channel can reproduce the swap operator $\mathbb{E}_U U^{\dagger\otimes 2}XU^{\otimes 2}=S_2$ \cite{Elben2019toolbox} . Note that the choice of classcial postprocessing $X_{c\backslash p}(\mb{b},\mb{b}')$ and thus the operator $X_{c\backslash p}$ depend on the random unitary ensemble being $n$-qubit Clifford or tensor-product single-qubit ones. Then by averaging all the terms in $\widehat{P_3}'$, one finishes the proof.
\end{proof}

Similarly, one can construct an alternative unbiased estimator for $P_4$ as
\begin{equation}\label{4mEstZ}
\begin{aligned}
  \widehat{P_4}'=\frac{1}{MK(K-1)}\sum_{i=1}^{M}\sum_{j\neq j'}(-1)^{\widehat{b}_c^{(i,j)}+\widehat{b}_c^{(i,j')}}X_{c\backslash p}(\widehat{\mb{b}}^{(i,j)},\widehat{\mb{b}}^{(i,j')}),
\end{aligned}
\end{equation}
by adding the information of the control qubit of $j'$-th shot. The proof is quite similar, for any term in the summation the expectation value shows 
\begin{equation}\label{4mEstZproof}
\begin{aligned}
  \mathbb{E}_{U,b_c,b_c',\mb{b},\mb{b}'}\ (-1)^{b_c+b_c'}X(\mb{b},\mb{b}')=&\mathbb{E}_{U} \ \mathrm{Pr}(b_c,\mb{b}|U)\mathrm{Pr}(b_c',\mb{b}'|U)\ (-1)^{b_c+b_c'}X(\mb{b},\mb{b}')\\
  =&\mathbb{E}_U\sum_{\mb{b},\mb{b}'} \ \left[\mathrm{Pr}(0,\mb{b}|U)-\mathrm{Pr}(1,\mb{b}|U)\right]\left[\mathrm{Pr}(0,\mb{b}'|U)-\mathrm{Pr}(1,\mb{b}'|U)\right]\ X(\mb{b},\mb{b}')\\
  =&\mathbb{E}_U\sum_{\mb{b},\mb{b}'}X(\mb{b},\mb{b}')\bra{\mb{b}}U\rho^2  U^\dagger\ket{\mb{b}}\bra{\mb{b}'}U\rho^2 U^\dagger\ket{\mb{b}'}\\
  =&\mathbb{E}_U\tr\left[\sum_{\mb{b},\mb{b}'}X(\mb{b},\mb{b}')\ketbra{\mb{b},\mb{b}'}{\mb{b},\mb{b}'}U^{\otimes 2}\ (\rho^2\otimes\rho^2)\ U^{\dagger\otimes 2}\right]\\
  =&\tr\left[ \mathbb{E}_U U^{\dagger\otimes 2}XU^{\otimes 2}\ (\rho^2\otimes\rho^2)\right]\\
  =&\tr\left[S_2(\rho^2\otimes\rho^2)\right]=\tr(\rho^4).
\end{aligned}
\end{equation}

At the end of this subsection, we give a brief discussion on the \textbf{statistical variance} of the estimator $\widehat{P_3}'$ in Proposition 2 by following the approach in the previous works \cite{singlezhou,Zhenhuan2022correlation}, and the same analysis works for $\widehat{P_4}'$.
For clearness, here we show the estimator $\widehat{P_3}'$ in Proposition 2 again as
\begin{equation*}\label{}
\begin{aligned}
  \widehat{P_3}'=\frac{1}{MK(K-1)}\sum_{i=1}^{M}\sum_{j\neq j'}(-1)^{\widehat{b}_c^{(i,j)}}X_{c\backslash p}(\widehat{\mb{b}}^{(i,j)},\widehat{\mb{b}}^{(i,j')}).
\end{aligned}
\end{equation*}
It is clear that $\widehat{P_3}'$ is the summation of $M$ i.i.d estimators, so here we first consider the case $M=1$, and omit the index $i$ of $(i,j)$ for the $i$-th setting and also the hat for the random variable for concise.
By definition, the variance shows
\begin{equation}
\begin{aligned}
\mathrm{Var}\left(\widehat{P_3}'\right)=\mathbb{E}\left(\widehat{P_3}'^2\right)-P_3^2,
\end{aligned}
\end{equation}
and the first term is
\begin{equation}\label{ap:varP3a}
\begin{aligned}
\mathbb{E}\left(\widehat{P_3}'^2\right)=\frac{1}{K^2(K-1)^2}\mathbb{E}\sum_{j_1\neq j_1'}\sum_{j_2\neq j_2'}(-1)^{b_c^{(j_1)}+b_c^{(j_2)}}X(\mb{b}^{(j_1)},\mb{b}^{(j_1')})X(\mb{b}^{(j_2)},\mb{b}^{(j_2')}).
\end{aligned}
\end{equation}
Note that $\{b_c,\mb{b}\}$ here are all random variables, and the calculation of each term in the summation depends on the index coincidence, similar as Eq.~\eqref{ap:eq:VarP3} for $\widehat{P_3}$. Here for simplicity, we only consider the two-coincidence case, that is $j_1=j_2, j_1'=j_2'$, or $j_1=j_2', j_1'=j_2$, which contributes to the leading term of the variance \cite{singlezhou,Zhenhuan2022correlation,liu2022detecting}, and one can also figure out the other sub-leading terms with less coincidence in a similar way.

For $j_1=j_2, j_1'=j_2'$, there are $K(K-1)$ terms in the summation of Eq.~\eqref{ap:varP3a}, and one of them would show as
\begin{equation}
\begin{aligned}
&\mathbb{E}\ (-1)^{b_c^{(j_1)}+b_c^{(j_1)}}X(\mb{b}^{(j_1)},\mb{b}^{(j_1')})X(\mb{b}^{(j_1)},\mb{b}^{(j_1')})\\
=&\mathbb{E}\ X^2(\mb{b}^{(j_1)},\mb{b}^{(j_1')})=\tr\left[\Phi^2_{\mc{E}}(X^2)\ \rho\otimes\rho\right]
\end{aligned}
\end{equation}
where the final equality follows similarly as Eq.~\eqref{ap:eq:P3Zest}, and the 2-copy twirling channel is denoted as $\Phi^2_{\mc{E}}(\cdot)=\mathbb{E}_{U\in\mc{E} }U^{\dagger\otimes 2}(\cdot)U^{\otimes 2}$. Note here the phase information of $b_c^{(j_1)}$ is cancelled, and thus one has $\rho$ but not $\rho^2$ in the final result. For  $j_1=j_2', j_1'=j_2$ also with $K(K-1)$ terms,
\begin{equation}
\begin{aligned}
&\mathbb{E}\ (-1)^{b_c^{(j_1)}+b_c^{(j_2)}}X(\mb{b}^{(j_1)},\mb{b}^{(j_2)})X(\mb{b}^{(j_2)},\mb{b}^{(j_1)})\\
=&\mathbb{E}\ (-1)^{b_c^{(j_1)}+b_c^{(j_2)}}X^2(\mb{b}^{(j_1)},\mb{b}^{(j_2)})=\tr\left[ \Phi^2_{\mc{E}}(X^2)\ \rho^2\otimes\rho^2\right]
\end{aligned}
\end{equation}
and the second equality can follow Eq.~\eqref{4mEstZproof}.
By combing both cases, one has the leading term of the variance as
\begin{equation}\label{}
\begin{aligned}
\mathrm{Var}\left(\widehat{P_3}'\right) \sim \frac{\tr[ \Phi^2_{\mc{E}}(X^2)\ (\rho^2\otimes\rho^2+\rho \otimes\rho) ]}{MK(K-1)}\leq \frac{2\tr[ \Phi^2_{\mc{E}}(X^2) \rho \otimes\rho ]}{MK(K-1)}
\end{aligned}
\end{equation}
where the inequality is due to $\rho- \rho^2$ and $\Phi^2_{\mc{E}}(X^2)$ are non-negative. Note that the final bound is actually the leading term of the variance of $P_2$ by using the original RM \cite{Elben2019toolbox}, which indicates that the hybrid framework reduces the variance of a 3-degree problem to a 2-degree one, similar as in the hybrid shadow estimation shown in Sec.~\ref{ap:var}.

The twirling result $\Phi^2_{\mc{E}}(X^2)$ depends on the random unitary ensemble $\mc{E}$. For the Clifford measurement primitive, it shows 
$\Phi^2_{\mc{E}_c}(X_c^2)=d\id_{d^2}+(d-1)S_2$; for the Pauli measurement primitive, it is in the tensor-product form $\Phi^2_{\mc{E}_p}(X_p^2)=\bigotimes_{i=1}^n (2\id_4+S_2^{(i)})$ with $S_2^{(i)}$ the swap operator for the $i$-th qubit pair \cite{Zhenhuan2022correlation}. As a result, for the Clifford measurement primitive,
\begin{equation}\label{ap:randomClvar}
\begin{aligned}
\mathrm{Var}_c\left(\widehat{P_3}'\right) \sim \frac{2[d+(d-1)P_2]}{MK^2}\leq \frac{4d}{MK^2}.
\end{aligned}
\end{equation}
For the Pauli measurement primitive,
\begin{equation}\label{ap:randomLvar}
\begin{aligned}
\mathrm{Var}_p\left(\widehat{P_3}'\right) \sim 
\frac{2\tr[ \bigotimes_{i=1}^n (2\id_4+S_2^{(i)}) \rho \otimes\rho ]}{MK^2}= \frac{2\sum_{A\subseteq[n]} \tr(\rho_A^2)2^{n-|A|}}{MK^2}\leq \frac{2\sum_{A\subseteq[n]} 2^{n-|A|}}{MK^2}= \frac{2d^{\log_2 3}}{MK^2}.
\end{aligned}
\end{equation}
where $A\subseteq[n]$ is any subsystem of the $n$-qubit system, $\rho_A=\tr_{\bar{A}}(\rho)$ the reduced density matrix on $A$ with $\tr(\rho_A^2)\leq 1$, and the final equality is by summing the binomial with $d=2^n$. As a result, to make the statistical error to be some constant, the shotting time should be $K=\mc{O}(d^{(\log_2 3)/2})=\mc{O}(d^{0.79})$.

From Eq.~\eqref{ap:randomLvar}, it is clear that the variance is related to the subsystem purity. For the noisy GHZ state $\rho=q\ket{\mathrm{GHZ}}\bra{\mathrm{GHZ}}+(1-q)\mathbb{I}/d$ used in the numerics, here we give a refined estimation of the scaling. Note that for subsystem $A\subset[n]$ with $d_A=2^{|A|}$,
\begin{equation}\label{}
\begin{aligned}
&\rho_A=\frac{q}{2}\left(\ket{00\cdots0}\bra{00\cdots0}+\ket{11\cdots1}\bra{11\cdots1}\right)+(1-q) \id_{d_A}/d_A,\\
&\tr(\rho_A^2)=q^2/2+(1-q^2)2^{-|A|}.
\end{aligned}
\end{equation}
which is only related to the qubit number in $A$, and for $A=[n]$ the whole system, $\tr(\rho^2)=q^2+(1-q^2)2^{-n}$. By inserting all these purity into Eq.~\eqref{ap:randomLvar}, one has
\begin{equation}\label{ap:randomLvar1}
\begin{aligned}
\mathrm{Var}_p\left(\widehat{P_3}'\right) &\sim 
2M^{-1}K^{-2}\left\{[q^2+(1-q^2)2^{-n}]+\sum_{k=0}^{n-1} \binom{n}{k} [q^2/2+(1-q^2)2^{-k}]2^{n-k}\right\}\\
&=2M^{-1}K^{-2}\left\{\frac{q^2}{2}+2^{n}\sum_{k=0}^n \binom{n}{k} \left[\frac{q^2}{2}2^{-k}+(1-q^2)2^{-2k}\right]\right\}\\
&=2M^{-1}K^{-2}\left\{\frac{q^2}{2}+2^{n}\left[\frac{q^2}{2}(1/2+1)^n+(1-q^2)(1/4+1)^n\right]\right\}\\
&=M^{-1}K^{-2}\left[q^2(1+3^{n})+2(1-q^2)(2.5)^n\right]=M^{-1}K^{-2}\left[q^2(1+d^{\log_2 3})+2(1-q^2)d^{\log_{2} 2.5}\right].
\end{aligned}
\end{equation}
As a result, the final variance is the interpolation between $d^{\log_2 3}$ and $d^{\log_{2}2.5}$, and thus the shot-number $K$ is between $\mc{O}(d^{(\log_{2}2.5)/2})=\mc{O}(d^{0.66})$, and $\mc{O}(d^{0.79})$, which is consistent with our numerical results.

The variance of $\widehat{P_4}'$ in Eq.~\eqref{4mEstZ} can be analysed in a similar way, and the leading term shows the same scaling behaviour as $\widehat{P_3}'$ in Eq.~\eqref{ap:randomClvar} and \eqref{ap:randomLvar}. We also numerically study the scaling of the statistical error with the qubit number $n$ using the Pauli measurement in Fig.~\ref{fig:varianceobsApp} (a), which is summarized in the final column of Table \ref{P4table}.

\begin{table}[htb]
\begin{tabular}
{|c|c|c|c|}
\hline
    $P_4$ & OS & HS & HR  \\
    \hline
     Anal.& $\mathcal{O}(\frac{d^{4}}{M^{2}})$ \cite{elben2020mixedstate} & $\mathcal{O}(\frac{d^{1.5}}{M})$ & $\mathcal{O}(\frac{d^{0.79}}{\sqrt{M}K})$ 
 \\
     \hline
     Numer.& \diagbox{}{}& $\mathcal{O}(\frac{d^{0.99}}{M})$ [blue]& $\mathcal{O}(\frac{d^{0.73}}{\sqrt{M}K})$[green]\\
     \hline
\end{tabular}
\caption{The statistical errors for estimating $P_4$ with different protocols using the Pauli measurement. The numerical results are from Fig.~\ref{fig:varianceobsApp} (a). The analytical results of HS and HR follow similarly as that of $P_3$ in Eq.~\eqref{ap:eq:P3Var} and \eqref{ap:randomLvar}, respectively. The numerical result of OS is not shown since the measurement and post-processing costs are too demanding for the $4$-degree function.
}\label{P4table}
\end{table}

\begin{figure}

    \includegraphics[width=9cm]{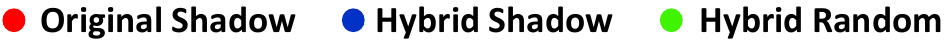}
    
    \includegraphics[width=5cm]{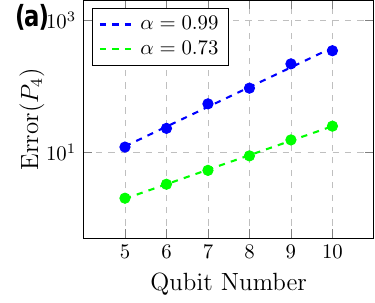}
    \includegraphics[width=5cm]{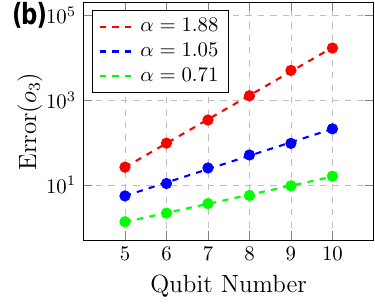}
    \includegraphics[width=5cm]{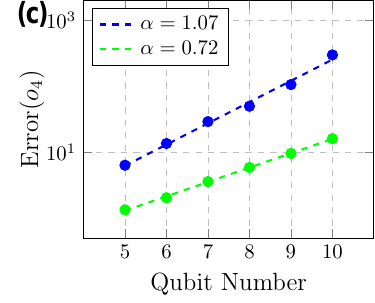}
    \caption{
    Scaling of statistical errors of measuring different quantities with respect to the qubit number $n$ for different protocols using the Pauli measurement.  The processed state is the noisy multi-qubit GHZ state, $\rho=0.8\ketbra{\mathrm{GHZ}}{\mathrm{GHZ}}+0.2\mathbb{I}_d/d$, and we take $M=10$ and $K=1$ for original shadow (OS) and hybrid shadow (HS) protocols, and $M=2$ and $K=5$ for hybrid random (HR) protocol, that is, in the regime $d\gg M(K)$ to study the scaling behavior. 
     (a) shows the error of measuring $P_4$ for HS and HR, using the estimator $\widehat{P_4}$ in Eq.~\eqref{4mEst} and $\widehat{P_4}'$ Eq.~\eqref{4mEstZ}, respectively. In (b) and (c),  $O$ is taken to be the local observable $\sigma_Z^1\otimes\sigma_Z^2$ on the first two qubits.
    (b) shows the error of measuring $o_3$ for OS, HS and HR, using the original shadow estimator \cite{huang2020predicting,elben2020mixedstate},
    $\widehat{o_3}$ shown in main text and $\widehat{o_3}'$ in Eq.~\eqref{eq:o3infty}, respectively.
    (c) shows the error of measuring $o_4$ for HS and HR, using the estimator $\widehat{o_4}$ in Eq.~\eqref{4mEstO} and $\widehat{o_4}'$ Eq.~\eqref{eq:o4Papp}, respectively. All the scalings are quite similar to that of $P_3$ in Fig.~3 (c), which shows the advantage of the current hybrid framework, and we list these results together with analytical bound in Table \ref{P4table}, \ref{o3table}, and \ref{o4table} for compare.
    Note that we do not show the results of OS in (a) and (c) here as in Fig.~3 (c) in main text, since the measurement and post-processing costs scale $\mc{O}(M^4)$ of OS for the $4$-degree functions $P_4$ and $o_4$,  which is demanding for the numerical simulation. Indeed, the budgets also set a big challenge for the application of the original shadow to high-degree functions.} 
    \label{fig:varianceobsApp}
\end{figure}

\section{Application in quantum error mitigation}\label{ap:sec:miti}
In the purification-based quantum error mitigation \cite{Huggins2021Virtual,Koczor2021Exponential}, the central task is to measure $o_m:=\tr(O\rho^m)$ and $P_m$. The measurement of $P_m$ has been discussed in the previous section. Here we give discussions about estimators of $o_m$ and their performance via the developed hybrid framework.

\subsection{Advantage on estimating $\tr(O\rho^t)$}\label{ap:sec:Ot}
In this subsection, we aims to estimate $o_t:=\tr(O\rho^t)$ by directly using the shadow set   $\left\{\widehat{\rho^t}_{(1)}, \widehat{\rho^t}_{(2)}, \cdots \widehat{\rho^t}_{(M)}\right\}$ via running Algorithm 1  in main text. Here we assume one can access a quantum computer controlling $t$-copy of the state. The copy number $t$ here is usually not large, for instance, $t=2$. 

The estimator is just $\widehat{o_t}=M^{-1}\sum_i \tr(O\widehat{\rho^t}_{(i)})$ by averaging total $M$ independent snapshots, and the variance of it is upper bounded by
\begin{equation}\label{ap:eq:VarOt}
    \begin{aligned}
      \mathrm{Var}(\widehat{o_t})\leq M^{-1}\left[\|O_0\|^2_{\mathrm{shadow}}+\tr(O\rho)^2\right]
    \end{aligned}
\end{equation}
on account of Eq.~(7) in Theorem 1 in main text. 
By applying results of the shadow norm in Proposition \ref{fact:shadownorm}, one further has
\begin{equation}\label{ap:eq:VarOt1}
    \begin{aligned}
      &\mathrm{Var}_c(\widehat{o_t})\leq M^{-1}\left[3\tr(O^2)+\tr(O\rho)^2\right],\\
      &\mathrm{Var}_p(\widehat{o_t})\leq M^{-1}\left[2^{\mathrm{supp}(O)}\|\Tilde{O}\|_{2}^2+\tr(O\rho)^2\right],
    \end{aligned}
\end{equation}
for the Clifford and Pauli measurements respectively. 

The second term can be bounded by $\tr(O\rho)^2\leq \|O\|_{\infty}^2$ which is assumed to be less than some constant, thus is not essential. Due to the hybrid framework which utilizes the power of a quantum computer, the variance of $\widehat{o}_t$ is reduced to the original shadow estimation on single-copy, and thus the hybrid framework provides a few advantages. For the Clifford measurement, suppose $O$ has a low rank, for instance $O=\ketbra{\Psi}$ some pure state, then $\tr(O^2)=\mc{O}(1)$; For Pauli measurements, suppose $O$ is a $k$-local operator, $2^k\|\Tilde{O}\|_{2}^2\leq 4^k \|O\|_{\infty}^2$, which is also moderate as $k$ is not very large. In particular, if $O$ is a $k$-local Pauli operator, this first term can be further tightened to $3^k$ \cite{huang2020predicting}.

To compare, we take $t=2$ and the estimator of the original shadow protocol shows \cite{Seif2022Shadow}
\begin{equation}\label{}
    \begin{aligned}
     \widehat{o_2}^{(\mathrm{OS})}=\frac{1}{M(M-1)}\sum_{ i\neq j }\tr(S_2 O\widehat{\rho}_{(i)}\otimes \widehat{\rho}_{(j)}),
    \end{aligned}
\end{equation}
with the shadow set $\left\{\widehat{\rho}_{(1)}, \widehat{\rho}_{(2)}, \cdots \widehat{\rho}_{(M)}\right\}$. For Pauli measurements with any Pauli operator $O$, the leading term of the upper bound of the variance is like $d^3/M^2$ \cite{garcia2021quantum,elben2020mixedstate}, which has nothing to do with the locality of $O$ compared with the hybrid approach; for Clifford measurements, the leading term of the upper bound of the variance is like $d\tr(O^2)/M^2$ \cite{huang2020predicting}. Thus in each cases, the variance would scale with the Hilbert space dimension, thus exponentially with the qubit number. These exponential advantages are also manifested by the numerical results in Fig.~3(a) and (b) in main text.


\subsection{Estimators of $o_3$ and $o_4$}\label{ap:sec:Om}

To measure $o_m$ with a larger $m$, for instance $m\geq 3$, it is challenging to directly collect the shadow snapshot $\widehat{\rho^m}$ due to the hardware limitation. Alternatively, here we combine a few low-degree shadow snapshots, as for estimating the moments. In main text, we have shown the estimator of $o_3$, and we give that of $o_4$ by modifying Eq.~\eqref{4mEst} as follows.

\begin{equation}\label{4mEstO}
    \begin{aligned}
       \widehat{o_4}&=\frac{2}{M(M-1)}\sum_{1\leq i<j \leq M}\tr(O_2\widehat{\rho^{2}}_{(i)}\otimes \widehat{\rho^2}_{(j)})\\
      &=\frac{1}{M(M-1)}\sum_{i\neq j }\tr(S_2O\widehat{\rho^{2}}_{(i)}\otimes \widehat{\rho^2}_{(j)})\\
       &=\frac{1}{M(M-1)}\sum_{i\neq j }\tr(O\widehat{\rho^{2}}_{(i)}\widehat{\rho^2}_{(j)}),
    \end{aligned}
\end{equation}
where $\{\widehat{\rho^{2}}_{(i)}\}$ is the shadow set of $\rho^2$. Here $O_2=(OS_2+S_2O)/2$, and in the second line we put this symmetry in the summation of indices. The variance of $\widehat{o_3}$ and $\widehat{o_4}$ can be analysed in the same way as $\widehat{P_3}$ in Sec.~\ref{ap:var}, and show similar behaviour. We also numerically study the scaling of the statistical error with the qubit number $n$ using the Pauli measurement in Fig.~\ref{fig:varianceobsApp} (b) and (c), which is summarized in the second column of Table \ref{o3table} and Table \ref{o4table}, respectively.

\begin{table}[htb]
\begin{tabular}
{|c|c|c|c|}
\hline
    $o_3$ & OS & HS & HR  \\
    \hline
     Anal.& $\mathcal{O}(\frac{d^{3}}{M^{1.5}})$ \cite{elben2020mixedstate} & $\mathcal{O}(\frac{d^{1.5}}{M})$ & $\mathcal{O}(\frac{d^{0.79}}{\sqrt{M}K})$ 
 \\
     \hline
     Numer.& $\mathcal{O}(\frac{d^{1.88}}{M^{1.5}})$ [red]& $\mathcal{O}(\frac{d^{1.05}}{M})$ [blue]& $\mathcal{O}(\frac{d^{0.71}}{\sqrt{M}K})$[green]\\
     \hline
\end{tabular}
\caption{The statistical errors for estimating $o_3$ with different protocols using the Pauli measurement. The numerical results are from Fig.~\ref{fig:varianceobsApp} (b). The analytical results of HS and HR follow similarly as that of $P_3$ in Eq.~\eqref{ap:eq:P3Var} and \eqref{ap:randomLvar}, respectively. Consequently, in practice, one needs $M=\mathcal{O}(d^{1.25})$ for OS, $M=\mathcal{O}(d^{1.05})$ for HS, and $K=\mathcal{O}(d^{0.71})$ for HR to make the error less than some constant. It is clear that HS and HR from the hybrid framework both show an advantage compared to OS.
}\label{o3table}
\end{table}

Similar to the measurement of moments $P_3$ and $P_4$ in Sec.~\ref{sec:Alternative}, here we give alternative estimators of $o_3$ and $o_4$ with the post-processing strategy in Ref.~\cite{Elben2019toolbox}.  The central idea is to effctively make RMs on $\rho O\rho$.

The first method decompose $O$ and additionally applies control operation. Any Hermitian $O$ can be decomposed into the form $O=\frac{1}{2}\norm{O}_{\infty}\left(V_O+V_O^\dagger\right)$, 
where $\norm{O}_{\infty}=\max_{\mb{b}} |\lambda_{\mb{b}}|$ is the largest absolute eigenvalue of $O$, and $V_O$ is some unitary determined by $O$. In particular, when $O$ is also a unitary such as a Pauli operator, $\norm{O}_\infty=1$ and $V_O=O$, and the quantum circuit is reduced to that in Fig.~\ref{Fig:FmHybrid}.

We show the quantum circuits in Fig.~\ref{fig:Odecomp} (a), where one additionally apply the Controlled-$V_O$ on the first-copy. The unbiased estimator of $o_3$ shows
\begin{equation}\label{eq:o3infty}
\begin{aligned}
  \widehat{o_3}'=\frac{\norm{O}_{\infty}}{MK(K-1)}\sum_{i\in[M]}\sum_{j\neq j'\in[K] }(-1)^{\widehat{b_c}^{(i,j)}}X_{c\backslash p}(\widehat{\mb{b}}^{(i,j)},\widehat{\mb{b}}^{(i,j')}).
\end{aligned}
\end{equation}

To construct the unbiased estimator of $o_4$, besides the RM results collected from the quantum circuit in Fig.~\ref{fig:Odecomp} (a), labeled by $\{b_{c}^{(i,j)},\mb{b}^{(i,j)}\}$, one also needs the measurement results collected in the Algorithm 1 from the quantum circuit in Fig.~1 (c) with $t=2$ in main text, labeled by $\{b_{c}^{'(i,j)},\mb{b}^{'(i,j)}\}$. Then the unbiased estimator of $o_4$ reads
\begin{equation}\label{eq:o4Papp}
\begin{aligned}
  \widehat{o_4}'=\frac{\norm{O}_{\infty}}{MKK'}\sum_{i\in[M]}\sum_{j\in[K],j'\in[K'] }(-1)^{\widehat{b_c}^{(i,j)}+\widehat{b_c'}^{(i,j')}}X_{c\backslash p}(\widehat{\mb{b}}^{(i,j)},\widehat{\mb{b}'}^{(i,j')}),
\end{aligned}
\end{equation}
where $K$ and $K'$ are the number of measurement results collected from these two circuits under the $i$-th same unitary $U$.
\begin{proof}
The proof of the unbiasedness of $\widehat{o_3}'$ is similar to that of $\widehat{P_3}'$ in Sec.~\ref{sec:Alternative}. It is not hard to show that the measurement probabilities satisfy
\begin{equation}
\begin{aligned}
  \mathrm{Pr}(0,\mb{b}|U)+\mathrm{Pr}(1,\mb{b}|U)&=\bra{\mb{b}}U\rho U^\dagger\ket{\mb{b}},\\
  \mathrm{Pr}(0,\mb{b}|U)-\mathrm{Pr}(1,\mb{b}|U)&=\frac{1}{2}\left(\bra{\mb{b}}U\rho V_O\rho U^\dagger\ket{\mb{b}}+\bra{\mb{b}}U\rho V_O^\dag \rho U^\dagger\ket{\mb{b}}\right)=\frac{1}{2}\bra{\mb{b}}U\rho (V_O+V_O^\dag)\rho U^\dagger\ket{\mb{b}}.
\end{aligned}
\end{equation}

Following similarly as Eq.~\eqref{ap:eq:P3Zest}, one has the expectation value for any one term from $\widehat{o_3}'$ in Eq.~\eqref{eq:o3infty} as (here we omit the labels of $i,j,j'$ for simplicity)
\begin{equation}
\begin{aligned}
  &\norm{O}_\infty\mathbb{E}_U\sum_{\mb{b},\mb{b}'}[\mathrm{Pr}(0,\mb{b}|U)-\mathrm{Pr}(1,\mb{b}|U)][\mathrm{Pr}(0,\mb{b}|U)+\mathrm{Pr}(1,\mb{b}|U)]X(\mb{b},\mb{b}')\\
  =&\mathbb{E}_U\sum_{\mb{b},\mb{b}'}X(\mb{b},\mb{b}')\bra{\mb{b}}U\rho U^\dagger\ket{\mb{b}}\bra{\mb{b}}U \rho \left[\frac{1}{2}\norm{O}_\infty(V_O+V_O^\dagger)\right] \rho U^\dagger\ket{\mb{b}}\\
  =&\mathbb{E}_U\sum_{\mb{b},\mb{b}'}X(\mb{b},\mb{b}')\bra{\mb{b}}U\rho U^\dagger\ket{\mb{b}}\bra{\mb{b}}U \rho O \rho U^\dagger\ket{\mb{b}}\\
=&\tr\left[S_2\rho\otimes(\rho O \rho)\right]=\tr(O\rho^3).
\end{aligned}
\end{equation}
where in the third line we use the decomposition of $O$.
\end{proof}
The proof of the unbiasedness of $\widehat{o_4}'$ in Eq.~\eqref{eq:o4Papp} is quite similar to that of $\widehat{P_4}'$ in Eq.~\eqref{4mEstZ}, and we do not elaborate it here.

The variance of $\widehat{o_3}'$ and $\widehat{o_4}'$ can be analysed in a similar way as that of $\widehat{P_3}'$, and the leading term shows the same scaling behaviour as in Eq.~\eqref{ap:randomClvar} and \eqref{ap:randomLvar}.
We also numerically study the scalings of the statistical error with the qubit number $n$ using the Pauli measurement in Fig.~\ref{fig:varianceobsApp} (b) and (c), 
which are summarized in the third column of Table \ref{o3table} and Table \ref{o4table}, respectively. The operator $O$ there is taken to be Pauli operator, and thus the control unitary in Fig.~\ref{fig:Odecomp} (a) is directly the Controlled-$O$ operation.

\begin{table}[htb]
\begin{tabular}
{|c|c|c|c|}
\hline
    $o_4$ & OS & HS & HR  \\
    \hline
     Anal.& $\mathcal{O}(\frac{d^{4}}{M^{2}})$ \cite{elben2020mixedstate} & $\mathcal{O}(\frac{d^{1.5}}{M})$ & $\mathcal{O}(\frac{d^{0.79}}{\sqrt{M}K})$ 
 \\ \hline
     Numer.& \diagbox{}{}& $\mathcal{O}(\frac{d^{1.07}}{M})$ [blue]& $\mathcal{O}(\frac{d^{0.72}}{\sqrt{M}K})$[green]\\
     \hline
\end{tabular}
\caption{The statistical errors for estimating $o_4$ with different protocols using the Pauli measurement. The numerical results are from Fig.~\ref{fig:varianceobsApp} (c). The analytical results of HS and HR follow similarly as that of $P_3$ in Eq.~\eqref{ap:eq:P3Var} and \eqref{ap:randomLvar}, respectively. The numerical result of OS is not shown since the measurement and post-processing costs are too demanding for the $4$-degree function.}
\label{o4table}
\end{table}

\begin{figure}
    \centering
    \includegraphics[width=10cm]{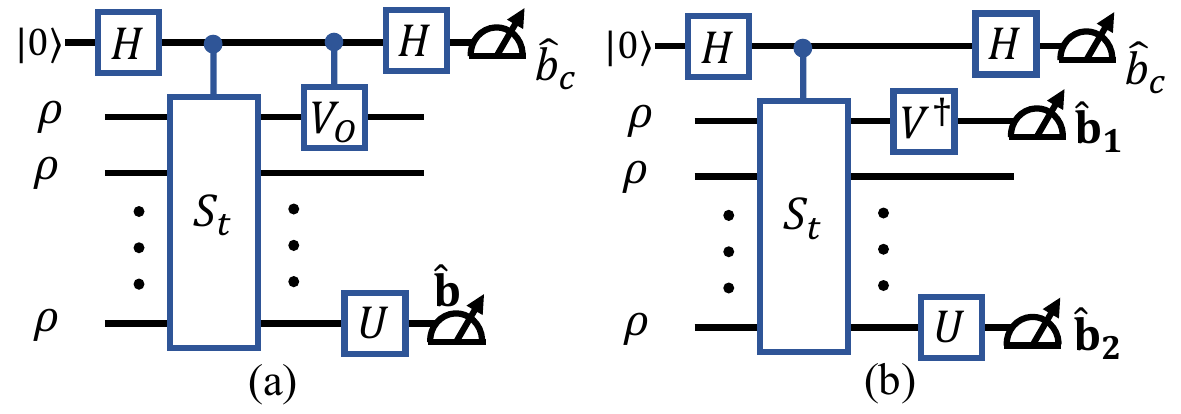}
    \caption{The quantum circuits of two protocols to measure $o_m=\tr(O\rho^m)$. 
    In (a), we decompose the observable as $O=\frac{1}{2}\norm{O}_\infty\left(V_O+V_O^\dagger\right)$, and perform a Controlled-$V_O$ operation on the control qubit and the first copy of $\rho$ after the Controlled-shift operation. Fig.~(4)  in main text is a special case of (a), when $O$ is also a unitary.
    In (b), we take the spectral decomposition of $O$ as $O=V\Lambda_OV^\dagger$, and perform the unitary $V^{\dag}$ on the control qubit. Besides the measurement on the control and final copy, we also conduct projective measurement on the first copy to get the result $\mb{b}_1$.}
    \label{fig:Odecomp}
\end{figure}

The second method is by directly measuring in the diagonal basis of $O$ on the first-copy, and taking into account the measurement result in the final estimator. The advantage compared to the first one is that here one do not need additional control operation. 

Any Hermitian $O$ can be diagonalized as $O=V\Lambda_O V^{\dag}$, with $V$ a (fixed) unitary and $\Lambda_O=\sum \lambda_{\mb{b}}\ketbra{\mb{b}}$ diagonal in the computational basis. Besides the RM on the second-copy with the result denoted by $\mb{b}_2$, we also add the unitary $V^\dag$ and measure the first-copy in the computational basis (essentially measure $O$) and denote the measurement result as $\mb{b}_1$. The corresponding quantum circuit is shown in Fig.~\ref{fig:Odecomp} (b) with $t=2$. Similar as in Eq.~\eqref{eq:fullmatrix}, the joint density matrix before all the projective measurements shows
\begin{equation}
\begin{aligned}
  \rho_{c,I,II}=&\frac{1}{2}\ketbra{0}{0}\otimes(V^{\dag}\otimes U)\rho^{\otimes 2}(V\otimes U^\dagger)+\frac{1}{2}\ketbra{1}{1}\otimes(V^{\dag}\otimes U)S_2\rho^{\otimes 2}S_2(V\otimes U^\dagger)\\ 
  &+\frac{1}{2}\ketbra{0}{1}\otimes(V^{\dag}\otimes U)\rho^{\otimes 2}S_2(V\otimes U^\dagger)+\frac{1}{2}\ketbra{1}{0}\otimes(V^{\dag}\otimes U)S_2\rho^{\otimes 2}(V\otimes U^\dagger).
\end{aligned}
\end{equation}
When the measurement outcomes of the two-copy are $\mb{b}_1,\mb{b}_2$, the (unnormalized) density matrix of the control qubit reads 
\begin{equation}
\begin{aligned}
  \rho_{c}=\frac{1}{2}\bra{\mb{b}_1}V^\dagger \rho V\ket{\mb{b}_1} \bra{\mb{b}_2}U\rho U^\dagger\ket{\mb{b}_2}\id_c+\frac{1}{2}\bra{\mb{b}_2}U\rho V\ketbra{\mb{b}_1}{\mb{b}_1}V^\dag \rho U^\dagger\ket{\mb{b}_2}X_c.
\end{aligned}
\end{equation}
and now the conditional probabilities in Eq.~\eqref{eq:probs} become 
\begin{equation}\label{eq:probsO2}
\begin{aligned}
  &\mathrm{Pr}(0,\mb{b_1},\mb{b_2}|U)=\frac{1}{2}\bra{\mb{b}_1}V^\dagger \rho V\ket{\mb{b}_1} \bra{\mb{b}_2}U\rho U^\dagger\ket{\mb{b}_2}+\frac{1}{2}\bra{\mb{b}_2}U\rho V\ketbra{\mb{b}_1}{\mb{b}_1}V^\dag \rho U^\dagger\ket{\mb{b}_2}\\
  &\mathrm{Pr}(1,\mb{b_1},\mb{b_2}|U)=\frac{1}{2}\bra{\mb{b}_1}V^\dagger \rho V\ket{\mb{b}_1} \bra{\mb{b}_2}U\rho U^\dagger\ket{\mb{b}_2}-\frac{1}{2}\bra{\mb{b}_2}U\rho V\ketbra{\mb{b}_1}{\mb{b}_1}V^\dag \rho U^\dagger\ket{\mb{b}_2}.
\end{aligned}
\end{equation}

Here we fist give the following estimator $(-1)^{\widehat{b}_c}\lambda_{\widehat{\mb{b}_1}}$, and prove that its expectation value on $b_c$ and $\mb{b}_1$ just corresponds to a single-shot RM on $\sigma:=\rho O \rho$.
\begin{equation}\label{eq:o3EstZProofapp}
    \begin{aligned}
       \mathbb{E}_{\{b_c,\mb{b}_1\}}\ (-1)^{\widehat{b}_c}\lambda_{\widehat{\mb{b}_1}}
       &=\sum_{b_c,\mb{b}_1} \mathrm{Pr}(b_c,\mb{b}_1,\mb{b}_2|U)\ (-1)^{b_c}\lambda_{\mb{b}_1}\\
       &=\sum_{\mb{b}_1}\lambda_{\mb{b}_1} [\mathrm{Pr}(0,\mb{b_1},\mb{b_2}|U)-\mathrm{Pr}(1,\mb{b_1},\mb{b_2}|U)]\\
&=\sum_{\mb{b}_1}\lambda_{\mb{b}_1} \bra{\mb{b}_2}U\rho V\ketbra{\mb{b}_1}{\mb{b}_1}V^\dag \rho U^\dagger\ket{\mb{b}_2}\\
&=\bra{\mb{b}_2}U\rho \ \left [V \sum_{\mb{b}_1}\lambda_{\mb{b}_1} \ketbra{\mb{b}_1}{\mb{b}_1}V^\dag\right]\  \rho U^\dagger\ket{\mb{b}_2}=\bra{\mb{b}_2}U\rho O \rho\ U^\dagger\ket{\mb{b}_2}.
    \end{aligned}
\end{equation}
where in the third line we insert the conditional probabilities in Eq.~\eqref{eq:probsO2}, and the final line is by the decomposition of $O$. 

Following Eq.~\eqref{ap:eq:P3Zest}, one can further show that $(-1)^{\widehat{b}_c}\lambda_{\widehat{\mb{b}_1}}X(\widehat{\mb{b}_2},\widehat{\mb{b}'_2})$ is the unbiased estimator of $o_3$, with the measurement result $\{\widehat{b}_c,\widehat{\mb{b}_1},\widehat{\mb{b}_2}\}$ from a single-shot, and $\widehat{\mb{b}'_2}$ from the other shot, under the same random unitary evolution. Finally, by averaging all possible combination of shots, the overall estimator shows
\begin{equation}\label{eq:o3EstZ}
\begin{aligned}
  \widehat{o_3}''=\frac{1}{MK(K-1)}\sum_{i\in[M]}\sum_{j\neq j'\in[K]}(-1)^{\widehat{b}_c^{(i,j)}}\lambda_{\widehat{\mb{b}_1}^{(i,j)}}\ X_{c\backslash p}(\widehat{\mb{b}_2}^{(i,j)},\widehat{\mb{b}_2}^{(i,j')}),
\end{aligned}
\end{equation}
where $\{\widehat{b}_c^{(i,j)},\widehat{\mb{b}_1}^{(i,j)}, \widehat{\mb{b}_2}^{(i,j)}\}$ denotes the measurement result for the control qubit, the first and the second copies of $j$-th shot under the $i$-the unitary evolution, collected from the quantum circuit in Fig.~\ref{fig:Odecomp} (b). We remark that this method can be applied to Sec.~\ref{ap:frame} to estimate function $F_m$ with general observable $\{O_i\}$, but may make the statistical variance larger. 

The estimator of $o_4$ can be constructed similarly by adding the information of the control qubit of $j'$-th shot as follows.
\begin{equation}\label{eq:o4EstZ}
\begin{aligned}
  \widehat{o_4}''=\frac{1}{MK(K-1)}\sum_{i\in[M]}\sum_{j\neq j'\in[K]}(-1)^{\widehat{b}_c^{(i,j)}+\widehat{b}_c^{(i,j')}}\lambda_{\widehat{\mb{b}_1}^{(i,j)}}\ X_{c\backslash p}(\widehat{\mb{b}_2}^{(i,j)},\widehat{\mb{b}_2}^{(i,j')}),
\end{aligned}
\end{equation}
and one can prove its unbiasedness by following Eq.~\eqref{4mEstZproof} and Eq.~\eqref{eq:o3EstZProofapp}.

\end{appendix}

\end{document}